
\documentclass[preprint,12pt,authoryear]{elsarticle}
\usepackage[margin=2.5cm]{geometry}




\usepackage{url}
\usepackage{amsmath}
\usepackage{amsfonts}
\usepackage{amssymb}
\usepackage{amsthm}
\usepackage{bbm}
\usepackage{graphicx}
\usepackage{algorithm}
\usepackage{algpseudocode}
\usepackage{subcaption}
\captionsetup{subrefformat=parens}
\usepackage{diagbox}
\usepackage{multirow, makecell}
\usepackage{bm}

\def\bigtimes{\mathop{\mathchoice{
   \vcenter{\hbox to10bp{\vrule height15bp width0pt \pdfliteral{
   q 1 J .8 w 0 1 m 10 14 l S 0 14 m 10 1 l S Q
}\hss}}}{
   \vcenter{\hbox to10bp{\kern1bp\vrule height10bp width0pt \pdfliteral{
   q 1 J .65 w 0 0 m 8 10 l S 0 10 m 8 0 l S Q
}\hss}}}{\times}{\times}
}}

\newtheorem{theorem}{Theorem}[section]
\newtheorem{lemma}[theorem]{Lemma}
\newtheorem{corollary}[theorem]{Corollary}


\journal{Computational Statistics \& Data Analysis}

\begin{document}

\begin{frontmatter}



\title{Thompson, Ulam, or Gauss? Multi-criteria recommendations for posterior probability computation methods in Bayesian response-adaptive trials} 


\author{Daniel Kaddaj\corref{cor1}}
\ead{dk620@cam.ac.uk}
\author{Stef Baas, Edwin Y.N. Tang, David S. Robertson, Lukas Pin\fnref{sen}}
\author{and Sofía S. Villar\fnref{sen}} 
\cortext[cor1]{Corresponding author}
\affiliation{organization={MRC Biostatistics Unit, University of Cambridge},
            addressline={East Forvie Building, Forvie Site, Robinson Way}, 
            city={Cambridge},
            postcode={CB2 0SR}, 
            country={United Kingdom}}
\fntext[sen]{Joint senior authors}

\begin{abstract}
Bayesian adaptive designs enable flexible clinical trials by adapting features based on accumulating data. Among these, Bayesian Response-Adaptive Randomization (BRAR) skews patient allocation towards more promising treatments based on interim data. Implementing BRAR requires the relatively quick evaluation of posterior probabilities. However, the limitations of existing closed-form solutions mean trials often rely on computationally intensive approximations which can impact accuracy and the scope of scenarios explored. While faster Gaussian approximations exist, their reliability is not guaranteed. Critically, the approximation method used is often poorly reported, and the literature lacks practical guidance for selecting and comparing these methods, particularly regarding the trade-offs between computational speed, inferential accuracy, and their implications for patient benefit.

In this paper, we focus on BRAR trials with binary endpoints, developing a novel algorithm that efficiently and exactly computes these posterior probabilities, enabling a robust assessment of existing approximation methods in use. Leveraging these exact computations, we establish a comprehensive benchmark for evaluating approximation methods based on their computational speed, patient benefit, and inferential accuracy. Our comprehensive analysis, conducted through a range of simulations in the two-armed case and a re-analysis of the three-armed Established Status Epilepticus Treatment Trial, reveals that the exact calculation algorithm is often the fastest, even for up to 12 treatment arms. Furthermore, we demonstrate that commonly used approximation methods can lead to significant power loss and Type I error rate inflation. We conclude by providing practical guidance to aid practitioners in selecting the most appropriate computation method for various clinical trial settings.

\end{abstract}






\end{frontmatter}



\section{Introduction}
\label{s:intro}
Adaptive designs aim to make clinical trials more patient-oriented and efficient by allowing pre-planned alterations to a study in response to accumulating data, while maintaining statistical validity and integrity \citep{burnett}. These designs have steadily gained popularity in the literature and are increasingly being used in recent practice \citep{lee}. Bayesian adaptive designs trigger adaptations to the experimental design based on the computation of the posterior probability of superiority~(PPS) for one of the treatments. In fact, one of the earliest described adaptive designs, Bayesian response-adaptive randomisation (BRAR), was first proposed by \citet{thompson} and involves allocating patients sequentially to one of two treatments proportional to their PPS. In 1933 it was not computationally feasible to approximate these probabilities, and computing them exactly with closed formulae required strict assumptions.

At present, Bayesian designs are a widely used type of adaptive design, ranging from dose-finding to confirmatory trials. For example, in dose-finding the continual reassessment method (CRM) \citep{oquigley1990} uses the posterior probability of a dose being the one closest to a target to assign the next cohort of patients to a dose. Recently, \citet*
{Pin2024backfilling} proposed using BRAR to allocate patients to doses in a patient-orientated way below the maximum tolerated dose to identify potential efficacy plateau in novel oncology treatments. Bayesian posterior probabilities may be calculated for multiple endpoints. Similarly, some designs include early stopping rules that are based on posterior probabilities of efficacy or futility \citep{gsponer}.
These designs involve the use of a complex posterior probability which, due to the inherent complexity and limitations of existing closed-form solutions for exact posterior probability computations, is usually approximated typically via computationally intensive simulations (e.g., \citet{berry}). 

While many adaptive designs, particularly Bayesian Response-Adaptive Randomization (BRAR), frequently utilize binary endpoints \citep{Pin2024software}, the potential for exact posterior probability computations in such settings is often overlooked. To our knowledge, precise closed-form formulae for the PPS of one treatment, even when using common Beta priors with integer-valued parameters, exist but are limited to the case of a maximum of four treatment arms and become computationally cumbersome for more than two treatments arms. Consequently, despite their potential benefits, particularly in small sample size settings where the accuracy of approximations is most critical, these exact methods remain less known and largely unutilized in practical applications.
Moreover, there is no (unique) software package that is typically used for such Bayesian response-adaptive trials, and no guidance on which approximation methods are most appropriate in different situations. Indeed, some trials do not straightforwardly provide the method of computation used. For example, \citet{tobacco} and \citet{ESET} don't give the methods used but make the codes available by email request, while \cite{endTB} detail the appointment  of an independent statistician employed for the calculations, and \citet{AGILE} give no information. 


In this paper, we address the computational challenges in BRAR trials, a setting known for frequent and impactful PPS calculations (see \citet{Pin2024software}). Our contribution lies in enabling exact computation of these probabilities and, crucially, in providing a comprehensive framework to understand what fundamentally differentiates these exact approaches from approximations, considering their implications for implementation time and inferential quality. While focused on BRAR, our findings extend relevance to other Bayesian adaptive designs.

Our contributions in this paper are as follows. First, we propose a novel algorithm to compute the PPS exactly for any number of treatment arms, with a computational time that is linear in the trial size. We then use this algorithm as a way to benchmark different approximation methods in practice, which allows us to derive clearer guidance for the use of these methods for practitioners to rely on when designing and implementing BRAR trials. More specifically, we focus on three methods, which to our knowledge are most common: the first is a simple but widely used (for example in \citet{ADORE}, \citet{REMAP-CAP} or \citet{CSPN}) approach based on Monte-Carlo simulation. The second is numerical integration, used, for example, in \citet{ARREST}. The third is a less common method using Gaussian approximation, but which promises improved computational performance, and was used in the Bayesian re-design in \citet{ga_trial}.

We show that as well as its accuracy benefits, our exact calculation algorithm can bring significant computational advantages, particularly for trials that frequently update their allocation probabilities such as early-phase or dose-finding trials. Indeed, while numerical integration is shown to be particularly computationally expensive, for two treatment arms exact calculation can be 12 times faster than Gaussian approximation and even faster for more arms, and 2290 times faster than the Monte-Carlo method. This is particularly useful since it allows for simulations to be carried out with more iterations or a greater range of parameters when designing a BRAR trial, the lack of which can in practice lead to uncertain power estimates (e.g. power estimates in \citet*{conor} based on only 1,000 simulations) or a failure to fully control the type I error rate as outlined in \citet*{baas}. Moreover, in terms of accuracy, we illustrate how Gaussian approximation can cause both significant inflation in the type I error rate and power loss, while the Monte-Carlo method even with 10,000 iterations can cause some power loss, therefore highlighting the importance of appropriately choosing a method of computation.

The paper is organized as follows. Section 2 introduces the model specification, assumptions and metrics used for comparing methods. This section also outlines the algorithm for the exact computation of posterior probabilities individually and over the course of a BRAR trial. Section 3 presents simulation studies for a 2-arm trial, introducing the derivation of the metrics suggested. Section 4 considers the case-study of a real-world 3-arm trial and our corresponding recommendations. Section 5 concludes the paper with a discussion, including practical guidance for the general multi-arm case.

\section{Methods}
\label{s:methods}

\subsection{Introductory notation}

In this section we introduce notation adapted from \citet{RAR_summary} and \citet{baas}. We consider clinical trials with a maximum trial size of $n$ patients labelled from 1 to $n$, and $k$ treatment arms labelled from 0 to $k-1$. The patients are sequentially assigned to exactly one treatment (i.e. if $i\geq i'$ then patient $i$ is allocated to a treatment no earlier than patient $i'$), with $A_{i,j}$ denoting a (possibly random) indicator which is 1 iff patient $i$ is assigned to treatment $j$. The response $Y_i\in\{0,1\}$ of patient $i$ to their treatment is observed before the next patient is allocated. The random number of patients up to and including patient $i$ that have been assigned to treatment $j$ is $N_{i,j}$, and the number of these patients who had a positive response to their treatment is $S_{i,j}$. We denote the trial history up to and including patient $i$ as ${\bm H}_i:=(\bm{A}_1,Y_1,\ldots,\bm{A}_i,Y_i)$. Moreover, for response probabilities $\bm{p}\in [0,1]^k$, let $Y_i\sim\text{Bernoulli}(p_{a_i})$ independent of ${\bm H}_{i-1}$, where $a_i$ is the treatment patient $i$ was assigned.

A \textit{response-adaptive} (RA) procedure uses knowledge of the trial's history to, for example, improve patient benefit by defining a function $\pi:\bigcup\limits_{i=0}^n(\{0,1\}^k\times\{0,1\})^{i}\rightarrow [0,1]^k$ such that $\pi_j({\bm H}_i):=\mathbb{P}(A_{i+1,j}=1\mid{\bm H}_i)$, i.e. one assigning randomisation probabilities based on trial history up to patient $i$. The joint probability distribution on the allocations and outcomes following from~$\pi$ is denoted~$\mathbb{P}^\pi,$ and the corresponding expectation and variance operators are denoted by~$\mathbb{E}^\pi$ and $\mathbb{V}^\pi,$ respectively.

A \textit{simple Bayesian response-adaptive randomisation} (S-BRAR) procedure is an RA procedure, as in \citet{thompson}, that uses the intuitive approach of assigning a patient to a particular treatment arm with the probability that given the known information at that time, that treatment is the best one. Therefore, we let $\mathbb{Q}:=\mathbb{Q}_0\times\ldots\times\mathbb{Q}_{k-1}$ be a prior on the response probabilities $\bm{p}$, and $\mathbb{Q}^i$ be the distribution of $\bm{p}$ given the trial history ${\bm H}_i$. In fact, $\mathbb{Q}^i$ decomposes nicely, and only depends on ${\bm H}_i$ through $N_{i,j}$ and $S_{i,j}$, which simplifies calculations (see Lemma \ref{lem:prior_decomp}). Then, 
\begin{equation}\label{eq:brar}
    \pi^{\mathbb{Q}}_{\text{S-BRAR},j}({\bm H}_i):=\mathbb{Q}^i(p_j>\max_{j'\neq j}p_{j'}).
\end{equation}
Response-adaptive procedures are useful when the aim of the trial is identify a superior option among many and there is considerable uncertainty about which of them could be the best one. The null hypothesis therefore corresponds to there being no treatment with a maximal probability of positive response, while the alternative hypotheses correspond to the different possible superior treatments, so that
\begin{equation*}
    H_0:\: \left| \max\{p_0,\ldots,p_{k-1}\}\right| >1; \: H_{A_j}:\: p_j>\max_{j'\neq j}p_{j'}.
\end{equation*}

We consider trials with early stopping rules and therefore a random sample size~(less than or equal to the maximum trial size~$n$). To this end, we define a \textit{rejection rule} $\mathcal{R}=\bigcup_{i=1}^n\bigcup_{j=0}^{k-1}\mathcal{R}^i_j,$ where $\mathcal{R}^i_j\subseteq\{0,1\}^{(k+1)i}$ is the set of trial histories where the trial is stopped after patient $i$ in favour of hypothesis $H_{A_j}$.

Since they are most often used in such BRAR trials \citep{edwin}, we utilize the posterior probabilities as test statistics for the various alternative hypotheses. More precisely, for a treatment $j\in \{0,\ldots,k-1\}$, we define
$T_j({\bm H}_i):=\mathbb{Q}^i(p_{j}>\max_{j'\neq j}p_{j'})$.
We can then choose some threshold probability $c\in[0,1]$ (see Appendix~\ref{appc} for details) to determine a rejection rule $\mathcal{R}(\mathcal{P},c)$ which rejects in favour of hypothesis $H_{A_j}$ if $\max_{i\in\mathcal{P}}T_j({\bm H}_i)>c$, where $\mathcal{P}\subseteq \{1,\dots, n\}$ is the non-empty set of patients after whose allocation an analysis occurs to determine whether to reject the null hypothesis, so that early stopping is permitted. Such a rejection rule was used in practice in several trials such as \citet{ESET}, as well as \citet{ADORE}.

Note that the test statistics after patient $i$ are the same as the randomisation probabilities for patient $i+1$, so that we can simply consider computing randomisation probabilities for a trial whose number of patients is increased by one, which will then also give us the test statistics for the original trial.

In this paper, we restrict ourselves to independent Beta priors with integer-valued parameters on the response probabilities ${\bm p}$, which generalise uniform priors and allow for easier implementation and analysis by making use of conjugacy. Let $\overset{d}{\equiv}$ denote the equality of two distributions. Indeed, if for all $0\leq j<k$, $\mathbb{Q}_j\overset{d}{\equiv}\text{Beta}(a_j,b_j)$ for some $\bm{a},\bm{b}\in[1,\infty)^k$, it can be shown (see Lemma \ref{lem:conj}) that $\mathbb{Q}^i\overset{d}{\equiv}\bigtimes_{j=0}^{k-1}\text{Beta}(a_j+S_{i,j},b_j+N_{i,j}-S_{i,j})$. In terms of the trial, this tells us that the posterior distribution on $p_j$ at any point in the trial is a Beta distribution depending only on its prior, and the number of positive and negative responses to treatment $j$ until that point, independent of other treatments. This holds regardless of the RA procedure chosen. To simplify notation, assuming that $a_j,b_j$ are integer valued for all $j$, the posterior probabilities we wish to calculate can then be written in the form $P_{k}(\bm{x},\bm{y}):=\mathbb{P}(X>\max_{j}X_{j})$ where $X\sim\text{Beta}(x_0,x_1)$ and $X_j\sim\text{Beta}(y_{2j},y_{2j+1})$ are independent, $\bm{x}\in\mathbb{N}^{2}$, and $\bm{y}\in\mathbb{N}^{2(k-1)}$.

\subsection{Computing randomisation probabilities}

Computing randomization probabilities for the above-described setting is not straightforward, often necessitating approximations. In this paper, we consider both exact and various approximated methods available for this context, which we describe next.

\subsubsection{Exact evaluation}

Though not often used, an exact formula for $P_{k}(\bm{x},\bm{y})$ in the case where $k=2$ was derived by both \citet{liebermeister} and \citet{thompson}. Recently \citet{miller_blog} suggested a sightly more computationally tractable form:
\begin{equation}\label{eq:miller}
   P_{2}((x_2,x_3),(x_0,x_1)) = \sum_{i=0}^{x_2-1} \frac{B(x_0+i,x_1+x_3)}{(x_3+i)B(i,x_3)B(x_0,x_1)},
\end{equation}
where $B$ is the Beta function. Using symmetry, the sum can be rewritten to have only $\min(x_0,x_1,x_2,x_3)$ terms.

Here, we propose a recursive approach allows us to calculate $P_{k}(\bm{x},\bm{y})$ exactly for any $k$ in a computationally efficient way. In fact, a more general result allows for a particularly fast sequential calculation of all the posterior probabilities required to run an S-BRAR trial with Beta priors having positive integer parameters:
\begin{theorem}\label{thm}
    For $\bm x\in\mathbb{N}^{2k}$, $S\subset\{0,1,\dots,k-1\}$, and $\bm{x}_{S'}$ the vector $\bm{x}$ with entries $x_{2s}$ and $x_{2s+1}$ removed for all $s\in S$, let 
    \begin{equation*}
        P(\bm{x};S):=P_{k-|S|+1}((\textstyle \sum_{s\in S}x_{2s},\sum_{s\in S}x_{2s+1}),\bm{x}_{S'}).
    \end{equation*} Let $P(\bm{x};\{0,\dots,k-1\}):=1$. Then given a sequence $\{\bm{x}^0,\dots,\bm{x}^n\}$ in $\mathbb{N}^{2k}$ increasing by exactly 1 in exactly one entry, Algorithm~\ref{algo} calculates $\{P(\bm{x}^{n'};\{j\}):j\in\{0,1,\dots,k-1\}, n'\in\{0,1,\dots,n\}\}$ in $\mathcal{O}\left(k2^k\left(n-2k+\sum_{i=0}^{2k-1}x^0_i\right)\right)$ operations (including the evaluation of a Beta function). In particular, this is the set of posterior probabilities required to run an S-BRAR  trial with $n$ patients, Beta priors with parameters corresponding to $\bm{x}^0$, and path of states equal to $\bm x^0, \bm x^1, .., \bm x^n$.
\end{theorem}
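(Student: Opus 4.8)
The plan is to reduce the theorem to a single \emph{update recursion}: a formula for how each quantity $P(\bm{x};S)$ changes when one entry of $\bm{x}$ is raised by $1$. Granting such a recursion and a closed form at the ``base'' state $\bm{x}=(1,\dots,1)$, the theorem is immediate, because Algorithm~\ref{algo} maintains the table $\{P(\bm{x};S):S\subseteq\{0,\dots,k-1\}\}$, initialises it at $(1,\dots,1)$, walks by unit increments to $\bm{x}^0$ (possible as the $x^0_i\ge 1$, being integer Beta-prior parameters) and then along $\bm{x}^0,\bm{x}^1,\dots,\bm{x}^n$, recording the singletons $P(\bm{x}^{n'};\{j\})$ each time a path state is reached.

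To obtain the recursion, set $\alpha_S:=\sum_{s\in S}x_{2s}$, $\beta_S:=\sum_{s\in S}x_{2s+1}$, and, using the conjugacy of $\mathbb{Q}^i$ (Lemma~\ref{lem:conj}), unpack the definition as
\[
P(\bm{x};S)=\int_0^1 f_{\alpha_S,\beta_S}(u)\prod_{m\notin S}F_{x_{2m},x_{2m+1}}(u)\,du,
\]
where $f_{a,b}$, $F_{a,b}$, $\overline F_{a,b}$ are the $\text{Beta}(a,b)$ density, CDF and survival function. The engine is the pair of elementary identities (valid for integer $a,b\ge 1$, by integration by parts)
\[
\overline F_{a+1,b}(u)-\overline F_{a,b}(u)=\frac{u^{a}(1-u)^{b}}{a\,B(a,b)},\qquad \overline F_{a,b+1}(u)-\overline F_{a,b}(u)=-\frac{u^{a}(1-u)^{b}}{b\,B(a,b)}.
\]
Let $\bm{x}'$ be $\bm{x}$ with $1$ added to one entry of the pair $m$. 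If $m\notin S$, only the factor $F_{x_{2m},x_{2m+1}}$ in the integrand changes; substituting the identity and absorbing the monomial $u^{x_{2m}}(1-u)^{x_{2m+1}}$ into the lead density (which turns $f_{\alpha_S,\beta_S}$ into $f_{\alpha_{S\cup\{m\}},\beta_{S\cup\{m\}}}$ up to the factor $B(\alpha_{S\cup\{m\}},\beta_{S\cup\{m\}})/B(\alpha_S,\beta_S)$) gives
\[
P(\bm{x}';S)=P(\bm{x};S)\pm\frac{B(\alpha_{S\cup\{m\}},\beta_{S\cup\{m\}})}{c\,B(x_{2m},x_{2m+1})\,B(\alpha_S,\beta_S)}\,P(\bm{x};S\cup\{m\}),
\]
with the sign and the constant $c\in\{x_{2m},x_{2m+1}\}$ determined by which of the two entries is incremented. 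If $m\in S$, the lead parameters change; applying the identity to $\overline F_{\alpha_S,\beta_S}$ inside $P(\bm{x};S)=\mathbb{E}\big[\overline F_{\alpha_S,\beta_S}(W)\big]$, where $W$ is the maximum of independent $\text{Beta}(x_{2m'},x_{2m'+1})$ over $m'\notin S$, then expanding $dF_W=\sum_{m'\notin S}f_{x_{2m'},x_{2m'+1}}\prod_{m''\notin S,\,m''\ne m'}F_{x_{2m''},x_{2m''+1}}$ by the product rule and again absorbing monomials into densities, gives
\[
P(\bm{x}';S)=P(\bm{x};S)\pm\frac{1}{c\,B(\alpha_S,\beta_S)}\sum_{m'\notin S}\frac{B(\alpha_{S\cup\{m'\}},\beta_{S\cup\{m'\}})}{B(x_{2m'},x_{2m'+1})}\,P(\bm{x};S\cup\{m'\}),
\]
with $c\in\{\alpha_S,\beta_S\}$ and a sign again set by which entry is incremented. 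In either case the new $P(\cdot;S)$ depends only on pre-increment values at $S$ and its supersets, and $P(\cdot;\{0,\dots,k-1\})$ stays at its convention value $1$.

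The base state is immediate: at $\bm{x}=(1,\dots,1)$ every competitor is uniform and $\alpha_S=\beta_S=|S|$, so $P((1,\dots,1);S)=\int_0^1 f_{|S|,|S|}(u)\,u^{\,k-|S|}\,du=B(k,|S|)/B(|S|,|S|)$, which equals $1$ when $|S|=k$. Correctness of Algorithm~\ref{algo} then follows by induction on the number of unit increments performed, with this formula as base case and the recursion above as inductive step (applied to all $2^k$ subsets in increasing order of cardinality, so in-place updates read only pre-increment entries). For the operation count, one unit increment in pair $m$ costs $\mathcal{O}(1)$ for each of the $2^{k-1}$ subsets $S$ with $m\notin S$ and $\mathcal{O}(k-|S|)$ for each subset $S\ni m$; since $\sum_{S\ni m}(k-|S|)=(k-1)2^{k-2}$, each increment uses $\mathcal{O}(k2^k)$ operations, counting each Beta-function evaluation as one. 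There are $\sum_{i=0}^{2k-1}x^0_i-2k$ increments from $(1,\dots,1)$ to $\bm{x}^0$ and $n$ further increments along the path, plus $\mathcal{O}(2^k)$ for initialisation, so the total is $\mathcal{O}\big(k2^k(n-2k+\sum_{i=0}^{2k-1}x^0_i)\big)$. Finally, by Lemma~\ref{lem:conj} the posterior of $p_j$ after patient $i$ is $\text{Beta}(x_{2j},x_{2j+1})$ for the path state $\bm{x}$ reached, and the success/failure counts evolve by unit increments, so an S-BRAR trial traces exactly such a path with $\bm{x}^0$ the prior, and $P(\bm{x};\{j\})$ is its randomisation probability and test statistic for arm $j$; this gives the last sentence.

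I expect the one genuinely delicate step to be the $m\in S$ recursion --- the product-rule bookkeeping that turns the change in the lead survival function into a sum over the remaining competitors --- which is also what forces the extra factor $k$ in the complexity. The $m\notin S$ case, the base state, and the induction are routine once the two survival-function identities are recorded.
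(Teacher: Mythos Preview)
Your proof is correct and follows the paper's strategy: initialise the table $\{P(\cdot;S)\}$ at the all-ones state, apply the one-step recursion (the paper's Lemma~\ref{lem:rec}, extended from singletons to arbitrary $S$) along a unit-increment path through $\bm{x}^0,\dots,\bm{x}^n$, and count $\mathcal{O}(k2^k)$ work per step. Two differences are worth flagging. First, your base value $P(\bm{1}_{2k};S)=B(k,|S|)/B(|S|,|S|)$ is a one-line closed form obtained directly from $\int_0^1 f_{|S|,|S|}(u)\,u^{k-|S|}\,du$; the paper instead spends most of its proof deriving the same quantity via an auxiliary recursion for $f(i,j):=\mathbb{P}(\text{Beta}(j,j)>\max_{i'\le i}U_{i'})$, arriving at the longer sum that appears in line~8 of Algorithm~\ref{algo}. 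Your formula is equivalent and could replace that initialisation outright. Second, the paper states the update recursion only for singleton $S$ in Lemma~\ref{lem:rec} and then asserts without proof that it extends to general $S$; your product-rule derivation for the $m\in S$ case (expanding $dF_W$ over the competitors and absorbing the resulting monomials into the lead density) supplies exactly that missing verification.
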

The proof of Theorem~\ref{thm} can be found in Appendix \ref{proof_thm}.
This computation also gives all the required posterior probabilities if the randomisation is done in batches of more than one, or if an initial period of equal randomisation is used since after this period we simply update the priors and then apply the theorem.

Now, if we seek to find $P_{k}(\bm{x},\bm{y})$ in isolation, we can consider an S-BRAR trial with uniform priors, and a path of states starting at a vector of ones corresponding to the uniform priors and ending at the state $(\bm x,\bm y)$. Then, $P_{k}(\bm{x},\bm{y})$ is a posterior probability at the end of the trial, and so applying Theorem~\ref{thm} we get:
\begin{corollary}
    $P_{k}(\bm{x},\bm{y})$ can be calculated exactly in $\mathcal{O}\left(k2^k\left(x_0+x_1+\sum_{i=0}^{2k-3}y_i-2k\right)\right)$ operations for any $k\geq2$, $\bm{x}\in\mathbb{N}^{2}$ and $\bm{y}\in\mathbb{N}^{2(k-1)}$.
\end{corollary}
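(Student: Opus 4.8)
The plan is to realise $P_{k}(\bm x,\bm y)$ as one of the terminal posterior probabilities that Algorithm~\ref{algo} outputs along a carefully chosen state path, and then to read off the operation count directly from Theorem~\ref{thm}. First I would form the target state vector $\bm x^{N}:=(x_0,x_1,y_0,y_1,\dots,y_{2k-3})\in\mathbb{N}^{2k}$, i.e.\ the concatenation of $\bm x$ and $\bm y$. With this labelling, in the notation of Theorem~\ref{thm} the designated arm $j=0$ carries parameters $(x^{N}_0,x^{N}_1)=(x_0,x_1)$ and $\bm x^{N}_{\{0\}'}=\bm y$, so that $P(\bm x^{N};\{0\})=P_{k}\big((x_0,x_1),\bm y\big)=P_{k}(\bm x,\bm y)$ by the definition of $P(\,\cdot\,;S)$ (here $X\sim\text{Beta}(x_0,x_1)$ plays the role of the ``special'' arm, exactly as in the definition of $P_k$).

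Next I would exhibit a path from the all-ones vector to this target. Since every entry of $\bm x$ and of $\bm y$ is a positive integer (consistent with the Beta parameters lying in $[1,\infty)$), the vector $\bm x^{0}:=(1,1,\dots,1)\in\mathbb{N}^{2k}$ satisfies $\bm x^{0}\le\bm x^{N}$ componentwise, so one can simply increment coordinates one at a time to obtain a sequence $\bm x^{0},\bm x^{1},\dots,\bm x^{N}$ in $\mathbb{N}^{2k}$ that increases by exactly $1$ in exactly one coordinate at each step, of length
\[
N=\sum_{i=0}^{2k-1}x^{N}_i-\sum_{i=0}^{2k-1}x^{0}_i=\Big(x_0+x_1+\sum_{i=0}^{2k-3}y_i\Big)-2k .
\]
The vector $\bm x^{0}=(1,\dots,1)$ corresponds precisely to independent uniform, i.e.\ $\text{Beta}(1,1)$, priors on all $k$ arms, and hence this path is an admissible input to Theorem~\ref{thm}; note that the theorem accepts \emph{any} monotone integer path and does not require it to be the state path of an actual run of an S-BRAR procedure, which is what makes this reduction work.

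Applying Theorem~\ref{thm} with $n=N$ and $\bm x^{0}$ as above, Algorithm~\ref{algo} computes $\{P(\bm x^{n'};\{j\}):j\in\{0,\dots,k-1\},\,n'\in\{0,\dots,N\}\}$, which in particular contains $P(\bm x^{N};\{0\})=P_{k}(\bm x,\bm y)$. The cost bound of the theorem then reads $\mathcal{O}\big(k2^{k}(N-2k+\sum_{i=0}^{2k-1}x^{0}_i)\big)=\mathcal{O}\big(k2^{k}(N-2k+2k)\big)=\mathcal{O}\big(k2^{k}N\big)$, and substituting the value of $N$ yields the claimed $\mathcal{O}\big(k2^{k}(x_0+x_1+\sum_{i=0}^{2k-3}y_i-2k)\big)$. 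I do not expect a genuine obstacle here: the whole content is bookkeeping — lining up the indexing so that $P(\bm x^{N};\{0\})$ is literally the desired $P_k(\bm x,\bm y)$, checking that the all-ones start is dominated componentwise by the target so that a valid $+1$-path exists, and verifying that the $\sum_i x^{0}_i=2k$ contribution cancels the $-2k$ in the theorem's bound so that the surplus path length collapses to exactly $N$.
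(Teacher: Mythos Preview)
Your proposal is correct and is essentially the same argument the paper gives: start at the all-ones vector (uniform priors), walk up one coordinate at a time to the concatenated target $(\bm x,\bm y)$, read off $P_k(\bm x,\bm y)=P(\bm x^N;\{0\})$ among the outputs of Algorithm~\ref{algo}, and substitute $\sum_i x^0_i=2k$ into the bound of Theorem~\ref{thm}. Your bookkeeping is clean and the cancellation of the $-2k$ term is handled exactly as intended.
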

Note that when $k=2$, this gives an operational complexity of $\mathcal{O}(x_0+x_1+y_0+y_1)$, whereas the adapted form of Equation \eqref{eq:miller} gives one of $\mathcal{O}(\min(x_0,x_1,y_0,y_1))$.

The recursive structure which underpins these results relies on the following lemma, which extends an approach from \citet{cook_rec}. 
\begin{lemma}\label{lem:rec}
    If $0\leq j,j'\leq k-1$, $k\geq2$, $s\in\{0,1\}$, $\bm{x}\in\mathbb{N}^{2k}$, then
    \begin{equation}\label{eq:rand}
        P(\bm{x}+\bm{e}_{j,s};\{j'\}))-P(\bm{x};\{j'\}) =\begin{cases}
            (-1)^{s+1}\sum_{j''\neq j}\frac{b_{j,j''}(\bm{x})}{x_{2j+s}}P(\bm{x};\{j,j''\}) &\text{if $j=j'$,}\\ (-1)^s\frac{b_{j,j'}(\bm{x})}{x_{2j+s}}P(\bm{x};\{j,j'\}) &\text{if $j\neq j'$},
        \end{cases}
\end{equation}
where $e_{l,j,s}:=\mathbbm{1}(l=2j+s)$ for $l=0,\ldots,2k-1$ represents an increment of one, and $b_{j,j'}(\bm{x}):=\frac{B(x_{2j}+x_{2j'},x_{2j+1}+x_{2j'+1})}{B(x_{2j},x_{2j+1})B(x_{2j'},x_{2j'+1})}$.
\end{lemma}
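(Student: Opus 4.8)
The starting point is the probabilistic reading of the quantities involved: with all arms independent, $P(\bm{x};S)=\mathbb{P}\big(X_S>\max_{j\notin S}X_j\big)$, where $X_S\sim\text{Beta}\big(\sum_{s\in S}x_{2s},\sum_{s\in S}x_{2s+1}\big)$ and $X_j\sim\text{Beta}(x_{2j},x_{2j+1})$ for $j\notin S$, under the convention $\max\emptyset=0$ (so that $P(\bm{x};\{0,\dots,k-1\})=1$ as stipulated). In particular $P(\bm{x};\{j'\})=\mathbb{P}\big(X_{j'}>\max_{j''\neq j'}X_{j''}\big)$ and $P(\bm{x};\{j,j'\})=\mathbb{P}\big(X_{jj'}>\max_{j''\neq j,j'}X_{j''}\big)$ with $X_{jj'}\sim\text{Beta}(x_{2j}+x_{2j'},x_{2j+1}+x_{2j'+1})$. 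The plan is to prove the off-diagonal case $j\neq j'$ directly and then deduce the diagonal case $j=j'$ from it via the identity $\sum_{j''=0}^{k-1}P(\bm{x};\{j''\})=1$, which holds because Beta laws are continuous, so almost surely exactly one arm attains the maximum.

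For $j\neq j'$, set $M:=\max_{j''\neq j,j'}X_{j''}$ and condition on $(X_{j'},M)$. Since $\{X_{j'}>\max_{j''\neq j'}X_{j''}\}=\{X_{j'}>X_j\}\cap\{X_{j'}>M\}$ and $X_j$ is independent of $(X_{j'},M)$,
\begin{equation*}
P(\bm{x};\{j'\})=\mathbb{E}\big[\mathbbm{1}(X_{j'}>M)\,F_j(X_{j'})\big],
\end{equation*}
where $F_j$ is the $\text{Beta}(x_{2j},x_{2j+1})$ CDF. Incrementing $x_{2j+s}$ changes only $F_j$, and the change is given by the standard recurrence relations for the regularized incomplete Beta function: $F_j(t)$ changes by $\pm\, t^{x_{2j}}(1-t)^{x_{2j+1}}/\big(x_{2j+s}\,B(x_{2j},x_{2j+1})\big)$, with the sign determined by whether the first ($s=0$) or second ($s=1$) parameter is incremented. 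Substituting this change into the display and absorbing the factor $t^{x_{2j}}(1-t)^{x_{2j+1}}$ into the $\text{Beta}(x_{2j'},x_{2j'+1})$ density of $X_{j'}$ adds the exponents, so the integrand becomes a multiple of the $\text{Beta}(x_{2j}+x_{2j'},x_{2j+1}+x_{2j'+1})$ density; collecting the Beta-function normalisers shows the multiplicative constant is $b_{j,j'}(\bm{x})/x_{2j+s}$, and the remaining integral equals $\mathbb{P}(X_{jj'}>M)=P(\bm{x};\{j,j'\})$. This gives the off-diagonal identity.

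For $j=j'$, the partition identity above gives $P(\bm{x};\{j\})=1-\sum_{j''\neq j}P(\bm{x};\{j''\})$; the constant $1$ is unaffected by incrementing $x_{2j+s}$, so the increment of $P(\bm{x};\{j\})$ is minus the sum over $j''\neq j$ of the increments of $P(\bm{x};\{j''\})$. Each of those is an off-diagonal change — the incremented arm $j$ differs from the query arm $j''$ — so the identity just proved applies term by term, and carrying the overall minus sign yields the diagonal formula. Note that the $k=2$ instance of the off-diagonal identity (in which $P(\bm{x};\{0,1\})=1$) is exactly the pairwise recursion of \citet{cook_rec}, so the argument above is the natural ``one additional arm'' extension of it.

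The genuinely routine part is the Beta-function algebra in the middle step — adding the exponents and collapsing the $B(\cdot,\cdot)$ factors into $b_{j,j'}(\bm{x})$. The part that needs care is the sign bookkeeping: the sign coming from which parameter of arm $j$ is incremented, and the minus sign from the partition-identity step, must be read off consistently from the incomplete-Beta recurrences and then combined. The only other things to check are that $M$ is genuinely independent of both $X_j$ and $X_{j'}$ (immediate from mutual independence of the arms) and the degenerate case in which $\{j''\neq j,j'\}$ is empty, which is covered by the $\max\emptyset=0$ convention.
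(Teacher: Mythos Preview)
Your proposal is correct and follows essentially the same approach as the paper's proof: both represent $P(\bm{x};\{j'\})$ as an integral against the Beta density of arm $j'$ with the other arms entering through their CDFs, apply the incomplete-Beta recurrence $I_p(a,b)=\frac{p^a(1-p)^b}{aB(a,b)}+I_p(a+1,b)$ to the factor for arm $j$, absorb the resulting monomial into the density to recognise $P(\bm{x};\{j,j'\})$, and then obtain the diagonal case from the partition identity $\sum_{j''}P(\bm{x};\{j''\})=1$. Your probabilistic phrasing (conditioning on $(X_{j'},M)$ and writing $\mathbb{E}[\mathbbm{1}(X_{j'}>M)F_j(X_{j'})]$) is just the expectation form of the paper's explicit integral.
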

The proof of Lemma~\ref{lem:rec} can be found in Appendix \ref{proof_lemma_diff}.
This seemingly technical result can be more intuitively understood by analogy to an S-BRAR trial. Suppose one has randomised the latest patient according to $\{P(\bm{x};\{j'\}):j'\in\{0,\dots,k-1\}\}$ which is known, they are assigned to treatment $j$, and exhibit a response $s$. Then the next patient will then be randomised according to $\{P(\bm{x}+\bm{e}_{j,s};\{j'\}):j'\in\{0,\dots,k-1\}\}$. As such, Lemma \ref{lem:rec} allows us to reduce the calculation of subsequent posterior probabilities to calculation of the set $\{P(\bm{x};\{j,j'\}):j'\neq j\}$, which are posterior probabilities for a $k-1$ armed trial. Though we could iterate until the two-armed case which can be evaluated explicitly, Theorem \ref{thm} instead proposes storing $P(\bm{x};S)$ for all non-empty $S\subseteq\{0,\dots,k-1\}$. Then, by noticing that Equation \eqref{eq:rand} can be extended to any non-empty $S$, not just $S=\{j'\}$, we can calculate the updated $\{P(\bm{x}+\bm{e}_{j,s};S):S\subseteq\{0,\dots,k-1\},S\neq\emptyset\}$ in only $\mathcal{O}(k2^k)$ operations since we stored all the posterior probabilities necessary for the update. 

\begin{algorithm}
\caption{Algorithm to exactly compute all posterior probabilities in a BRAR trial with integer Beta priors}\label{algo}
\begin{algorithmic}[1]
\State \textbf{Input:} A sequence $\bm{x}^0,\dots,\bm{x}^n$ in $\mathbb{N}^{2k}$ increasing by exactly 1 in exactly one entry
\State $n' \gets \sum_{i=0}^{2k-1}x_i^0-2k$
\For{$j\gets 0$ to $k-1$}
\State $P^{(-n')}(\{j\})\gets \frac{1}{k}$
\EndFor
\For{$l\gets 2$ to $k-1$}
\For{$S\subseteq\{0,\dots,k-1\}$ s.t. $|S|=l$}
\State $P^{(-n')}(S)\gets \frac{1}{k-l+1}+(k-l)\sum_{j=1}^{l-1}\left(\frac{B(k-l+j,j+2)}{jB(j,j+1)}-\frac{B(k-l+j,j+1)}{jB(j,j)}\right)$
\EndFor
\EndFor
\State Create a dummy sequence $\bm{1}_{2k}\rightarrow\bm{x}^{-n'},\dots,\bm{x}^0$ increasing by exactly 1 in exactly one entry
\For{$\tilde{n}\gets -n'$ to $n-1$}
\State Let $j\in\{0,\dots,k-1\},s\in\{0,1\}$ be s.t. $x^{{\tilde{n}}+1}_{2j+s}-x^{\tilde{n}}_{2j+s}=1$
\For{$l\gets 1$ to $k-2$}
\For{$S\subseteq\{0,\dots,k-1\}$ s.t. $|S|=l$}
\If{$j\in S$}
\State $P^{(\tilde{n}+1)}(S) \gets P^{(\tilde{n})}(S)+(-1)^{s+1}\sum_{j'\notin S}\frac{b_{j'}(\bm{x}^{\tilde{n}};S)P^{(\tilde{n})}(S\cup\{j'\})}{\sum_{j''\in S}x^{\tilde{n}}_{2j''+s}}$
\State where $b_{j'}(\bm{x}^{\tilde{n}};S):=\frac{B(x^{\tilde{n}}_{2j'}+\sum_{i\in S}x^{\tilde{n}}_{2i},x^{\tilde{n}}_{2j'+1}+\sum_{i\in S}x^{\tilde{n}}_{2i+1})}{B(x^{\tilde{n}}_{2j'},x^{\tilde{n}}_{2j'+1})B(\sum_{i\in S}x^{\tilde{n}}_{2i},\sum_{i\in S}x^{\tilde{n}}_{2i+1})}$
\Else
\State $P^{(\tilde{n}+1)}(S) \gets P^{(\tilde{n})}(S)+(-1)^{s}\frac{b_{j}(\bm{x}^{\tilde{n}};S)P^{(\tilde{n})}(S\cup\{j\})}{x^{\tilde{n}}_{2j+s}}$
\EndIf
\EndFor
\EndFor
\State \textbf{Store and output:} $P(\bm{x}^{n+1};\{j\})=P^{(n+1)}(\{j\})$
\State Clear $P^{(n)}$
\EndFor
\end{algorithmic}
\end{algorithm}

\subsubsection{Approximation methods}

An alternative, which is commonly used, is to approximate $P(\bm{x};\{j\})$ with one of the following methods:
\begin{itemize}
    \item \textit{Numerical integration} (NI), used for example in \citet{ARREST}, can be used with the required integral made explicit in Lemma \ref{lem:prior_decomp}.
    \item \textit{Repeated sampling} (RS) is an intuitive approach first developed by \citet{ulam}, used for example in \citet{ADORE}, which involves taking $K$ independent samples $\bm{X}_1,\ldots,\bm{X}_K$ from $\mathbb{P}:=\bigtimes_{j'=0}^{k-1}\text{Beta}(x_{2j'},x_{2j'+1})$ and observing the fraction of these in which $j$ was the best treatment arm, so that
    \begin{equation}
        \hat{P}^{\text{RS}}(\bm{x};\{j\}):=\frac{1}{K}\sum_{K'=1}^K\mathbbm{1}(X_{j,K'}>\max_{j'\neq j}X_{j',K'}).
    \end{equation}
    \item \textit{Gaussian approximation} (GA) was proposed for $k=2$ in \citet{cook_ga} and used for the Bayesian re-design in \citet{ga_trial}, which approximates the Beta distributions in the posterior probability by normal distributions with the same mean and variance, giving for the general case of $k\geq 2$
\begin{equation}
    P^{\text{GA}}(\bm{x};\{j\}) = F^j(0,\ldots,0)
\end{equation}
where $F^j$ is the cumulative distribution function for the multivariate ($k-1$ dimensional) normal distribution $\mathcal{N}(\boldsymbol{\mu}^{(j)},\boldsymbol{\Sigma}^{(j)})$ with $\boldsymbol{\mu}^{(j)} = \boldsymbol{\mu}_{-j} - m_j\bm 1_{k-1}, 
\boldsymbol{\Sigma}^{(j)} = \boldsymbol{D}_{-j}+ s^2_j \bm 1_{k-1} \bm 1_{k-1}^\top$
where $\boldsymbol{\mu}_{-j}$ is the vector $[m_1,\dots,m_k]$ with entry $m_j$ removed, $\bm 1_{k-1}$ is the all-ones vector of length $k-1$, $\boldsymbol{D}_{-j}$ is the diagonal matrix with diagonal $[s^2_0,\dots s^2_k]$ with $s_j^2$ removed, and $m_{j'} = \frac{x_{2j'}}{x_{2j'}+x_{2j'+1}}$, $s_{j'}^2 = \frac{x_{2j'}x_{2j'+1}}{(x_{2j'}+x_{2j'+1})^2(x_{2j'}+x_{2j'+1}+1)}$ for all $j'$. In fact, a different form of Gaussian approximation for $k=2$ was described earlier by \citet{altham}, but it does not easily generalise to the case when $k>2$ and, to the best of the authors' knowledge, has not been used in a BRAR trial in practice.
\end{itemize}

\subsection{Why should we care about posterior probability computation?}

 First, we assess the fundamental trade-off between the accuracy and computational complexity of approximating a single probability, $P(\bm{x};\{j\})$, in isolation. Then, and based on the previously described assessment, we benchmark approximation methods from two distinct perspectives. In the broader context of a trial, these posterior probabilities serve two primary roles: one when used as test statistics, in which case  we quantify how their approximation affects critical inferential properties such as the Type I error rate and power and second, when these probabilities are employed to guide the sequential patient randomization throughout the trial, in which case we investigate the potential compounding of approximation errors and their subsequent impact on allocations and on other key operating characteristics.

Since the posterior probabilities are used in different ways for testing and randomisation, and this can in practice be done on different systems, different methods of calculation might be appropriate for each. In terms of testing, to implement the Bayesian-based rejection rule introduced, the posterior probability needs to be evaluated explicitly. For randomisation, however, computing the value of $\pi^{\mathbb{Q}}_{\text{S-BRAR}}({\bm H}_i)$ is not strictly necessary, as this can be achieved instead by sampling from $\mathbb{Q}^i$ and assigning patient $i+1$ to the arm corresponding to the maximal sample \citep{russo}.

However, the randomisation probabilities need to be known explicitly for various regularised versions of S-BRAR. Tuning procedures have been proposed to mitigate the effects of extreme and highly variable probabilities by scaling the randomisation probabilities. A tuned BRAR (T-BRAR) procedure is therefore defined as an S-BRAR procedure which has had a \textit{tuning procedure} ${\bm t}:[0,1]\rightarrow\mathbb{R}_+^k$ applied to it such that 
\begin{equation}
    \pi^{\mathbb{Q}}_{\text{T-BRAR},j}({\bm H}_i):=\frac{t_j(\pi^{\mathbb{Q}}_{\text{S-BRAR},j}({\bm H}_i))}{\sum_{j'=0}^{k-1}t_{j'}(\pi^{\mathbb{Q}}_{\text{S-BRAR},j'}({\bm H}_i))}.
\end{equation}

One such approach, used among others by \citet{ESET}, aims to favour not only arms whose treatment effect is likely better, but also those whose variance is greater, since not accounting for this is a significant cause of extreme allocation proportions. For a parameter $m\in\mathbb{N}$, we define such an approach as \textit{variance-scaling} with 
\begin{equation}\label{eq:vs}
    t_{j,\text{VS}(m)}(\pi^{\mathbb{Q}}_{\text{S-BRAR},j}({\bm H}_i)):= \left(\frac{\pi^{\mathbb{Q}}_{\text{S-BRAR},j}({\bm H}_i)\text{Var}_{\mathbb{Q}^i}(p_j)}{N_{i,j}+1}\right)^{\frac{1}{m}}.
\end{equation}

\subsection{Testing}

In designing a trial with rejection rule $\mathcal{R}(\mathcal{P},c)$ we typically choose $c$ such that the type I error rate is below a nominal significance level $\alpha$. If our null hypothesis can be reduced to $p_0=\ldots=p_{k-1}$, either if $k=2$ as is the case Section \ref{s:sim} or by adding further alternative hypotheses as is the case in Section \ref{s:cs}, an \textit{unconditional exact} test UX$(\mathcal{P},\alpha)$ does this by setting $c$ to $c(\mathcal{P},\alpha)$, which is defined as the smallest $c$ to control the type I error rate below $\alpha$.

Finding this control parameter is not straightforward and it constitutes another useful application of exact computation. To see this we first look at a slightly simpler approach which seeks instead to control the type I error rate below $\alpha$ at only one response probability $p$ within the null hypothesis. The \textit{pointwise posterior} test PP($\mathcal{P},p,\alpha$) sets $c$ to $c(\mathcal{P},p,\alpha)$, which is instead defined as the smallest $c$ to control the type I error rate below $\alpha$ if $p_0=\ldots=p_{k-1}=p$. By noting that $c(\mathcal{P},\alpha)=\sup_{p\in[0,1]}c(\mathcal{P},p,\alpha)$, we can therefore see how finding $c(\mathcal{P},p,\alpha)$ helps us find $c(\mathcal{P},\alpha)$, and, although not used in this paper, Algorithm 2 from \citet{baas} outlines how to do this efficiently.

Algorithm 1 from the same paper outlines a way to find $c(\mathcal{P},p,\alpha)$ which involves calculating all possible posterior probabilities $P(\bm{x};\{j\})$ for all $0\leq j<k$ and $\bm{x}\in\mathbb{N}^{2k}$ such that for all $j'$, $x_{2j'}\geq a_{j'}$, $x_{2j'+1}\geq b_{j'}$ and $\sum_{l=0}^{2k-1}x_l< n+\sum_{j'=0}^{k-1}a_{j'}+b_{j'}$. If we use uniform priors, this corresponds to $\mathcal{O}(n^{2k-1})$ probabilities, making it clear that for larger $n$ or $k$ efficient computation of these probabilities can be important. 

Moreover, we can see that not only the implementation of the allocation procedure and the calculation of the test statistic, but also the derivation of the exact test itself depends on accurate computation of these posterior probabilities, which could impact on inference. In the case of a deterministic approximation such as GA, if this is used both for calculating the test statistics and $c(\mathcal{P},p,\alpha)$, then the type I error rate will remain controlled below $\alpha$ at the required point, though the remainder of the type~I error rate profile and power may be different.

\subsection{Operating characteristics}\label{subsec:oc}

We consider a range of operating characteristics (OCs). Letting $\bar{T}$ be the patient after which the trial is stopped, following \citet{baas_notation} we define:
\begin{itemize}
    \item The \textit{type I error rate} is the probability that we reject the null hypothesis given it is true, i.e., $\mathbb{P}^{\pi}(\max_{j\in\{1,\dots, k\}}T_j({\bm H}_{\bar{T}})>c)$.
    \item The \textit{power} is the probability we reject the null hypothesis in favour of $H_{A_j}$, i.e., $$\mathbb{P}^{\pi}(T_j({\bm H}_{\bar{T}})>c),$$  when $j$ is the superior treatment arm.
    \item $\mathbb{E}^{\pi}\left[N_{j,\bar{T}}+(n-\bar{T})\mathbbm{1}(T_j({\bm H}_{\bar{T}})>c)\right]$ is the number of \textit{expected patients allocated to the superior arm} (EPASA), where $j$ is the superior treatment arm. The assignment of patients both before and after possible optional stopping is considered so that, for example, stopping early in favour of the best treatment is not penalised.
    \item $\mathbb{V}^{\pi}\left(N_{j,\bar{T}}+(n-\bar{T})\mathbbm{1}(T_j({\bm H}_{\bar{T}})>c)\right)$ is the \textit{variance in patients allocated to the superior arm} (VPASA), where $j$ is the superior treatment arm. This aims to capture the variance in the benefit to patients within the trial, since the EPASA may be high but fail to show that in a number of cases there can be harm to patients.
\end{itemize}
Since BRAR trials have been shown to produce biased treatment effect estimates \citep*{thall}, and so are better suited to differentiate between test hypotheses of interest rather than estimating treatment effects, we have not included OCs based on such treatment effect estimates.  

Note that all these OCs can be expressed as the expected value of a function of ${\bm H}_{\bar{T}}$, with the VPASA done through the formula for variance in terms of expectations. If we further specify that the rejection rule is of the form $\mathcal{R}(\mathcal{P},c)$, all these OCs can then be calculated exactly as outlined in \citet{baas}. The same can be done to find the OCs if GA is used either to randomise patients during the trial or to approximate the test statistics since GA is deterministic. However, if the approximation method used for the posterior probabilities is random, as for RS, or if the the number of patients or treatment arms is too large for such an approach to be computationally feasible, the same cannot be done. 

In such cases, we can estimate the OCs by simulation. In particular, we can run some large number $K$ of trials with the desired approximation method and use the average of the function of ${\bm H}_{\bar{T}}$ within the operating characteristic's definition over all these trials as an estimate for the target OC. Given a particular trial set-up, for the type I error rate and power each sample trial will end with a 0 or 1, so that these can be thought of as i.i.d. Bernoulli random variables $B_1,\ldots,B_K$ with some parameter $q\in[0,1]$ which is the OC we wish to find. Then a Chernoff bound \citep{chernoff} can be used to get an upper bound on the probability that the absolute distance between our estimate and the true value is greater than some $\epsilon$ we can choose. For values of $q$ that are not very small, the following from \citet{kearns_saul} gives us a tighter bound:
\begin{equation}\label{eq:ks}
    \mathbb{P}\left(\left|\frac{1}{K}\sum_{K'=1}^KB_{K'}-q\right|>\varepsilon\right)\leq2\left(\frac{1-q}{q}\right)^{-\frac{K\cdot\varepsilon^2}{1-2q}}
\end{equation}
for all $\varepsilon>0$. This is particularly useful since RS was used in practice to estimate test statistics in \citet{ADORE}, though with $10^6$ samples instead of the $10^4$ studied here.

\section{Investigation of two-armed S-BRAR}
\label{s:sim}

We will first consider a two-armed trial with a binary endpoint in greater detail. We use an S-BRAR procedure with uniform priors on the response probabilities, and no early stopping is incorporated.

\subsection{Assessing the effect of approximating a single posterior probability}

Suppose we wish to calculate a single posterior probability in isolation. For the sake of simplicity, denote $P_{2}((c+1,d+1),(a+1,b+1))$ by $P(a,b,c,d)$. In this case, the principal trade-off is between accuracy and computational feasibility. We compare $P_{NI}$ (obtained using the \texttt{dblquad} function from Python's \texttt{scipy} package \citep{dblquad}), $\hat{P}_{RS}$, $P_{GA}$, and Equation \eqref{eq:miller} for $P$ as computational methods. For repeated sampling, we use $K=10^4$ simulations, which comes at a significant run-time cost despite remaining accuracy concerns.

We compute $P(a,b,c,d)$ using a 14-inch MacBook Pro, M3, 18 GB RAM, macOS Sonoma 18.1, Python in VSCode. Note that for constant $n:=a+b+c+d$, which corresponds to the number of patients in the trial up to that point, the exact calculation is slowest if $a=b=c=d$, while both GA and RS have approximately constant run-times over the different values of $a,b,c,d$. Figure~\ref{fig:iso}(a) therefore compares the methods in the worst case for exact calculation. For $n\leq 1000$, both GA and exact calculation are at least 60 times faster than NI, and RS is at least as fast as NI for $n\gtrapprox 50$, with all these differences growing as $n$ increases. RS has an approximately constant run-time with respect to $n$, and for $n\leq 1000$ has GA being about 190 times faster, and exact calculation being at least 12 times faster. GA also has an approximately constant run-time with respect to $n$, and is slower than exact calculation only for $n\lessapprox 50$. The run-time for exact calculation is approximately linear in $n$, and is about 15 times slower than GA for $n=1000$. Moreover, since concerns around computational feasibility likely arise when calculating a sequence or a range of such probabilities, it is possible to pre-load, for example, all the values of the beta-function with positive integer arguments of at most 1000. Having done this, exact calculation becomes faster than GA for $n\lessapprox 400$, and at $n=1000$ exact calculation is about twice as slow. 

As for accuracy, the NI used provides an accuracy estimate of at most $10^{-7}$ for $n\leq 1000$. As the mean of i.i.d. Bernoulli random variables, $10^4\hat{P}_{RS}\sim \text{Bin}(10^4,P)$. We can therefore use the formula from \citet{blyth} to find that
\begin{equation}
    \mathbb{E}\left[|\hat{P}_{RS}-P|\right]=2\binom{10^4-1}{l-1}P^l(1-P)^{10^4-l+1}
\end{equation}
where $l$ is such that $l-1\leq P\times 10^4 < l$. Importantly, this is a function of $P$, and takes the maximal value of $3.99\times10^{-3}$ when $P=0.5$. As a result, we find empirically that for any fixed number of patients on each arm (i.e. fixed $a+b$ and $c+d$) such that $n\leq 200$, the largest value of the mean absolute error $\mathbb{E}\left[|\hat{P}_{RS}(a,b,c,d)-P(a,b,c,d)|\right]$ is approximately constant at $3.99\times10^{-3}$, and this would be expected to hold for larger $n$. In fact, this gives us a bound on the mean absolute error from repeated sampling for an arbitrary $K$ which applies for any value of $j$, $k$, and $\bm{x}$, and can be used to select $K$ as required:
\begin{equation}\label{eq:rs}
    \mathbb{E}\left[|\hat{P}^{RS}_{j,k}(\bm{x})-P_{j,k}(\bm{x})|\right]\leq\binom{K-1}{\lfloor\frac{K}{2}\rfloor}2^{-K}.
\end{equation}

\begin{figure}
    \centering
    \begin{subfigure}[b]{0.47\linewidth}
        \includegraphics[width=\linewidth]{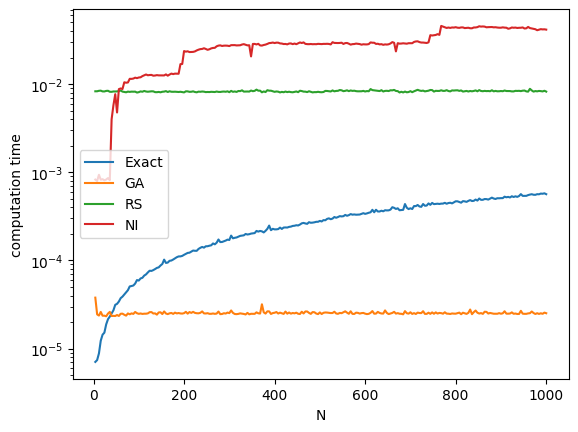}
        \caption{}
    \end{subfigure}
    \begin{subfigure}[b]{0.47\linewidth}
        \includegraphics[width=\linewidth]{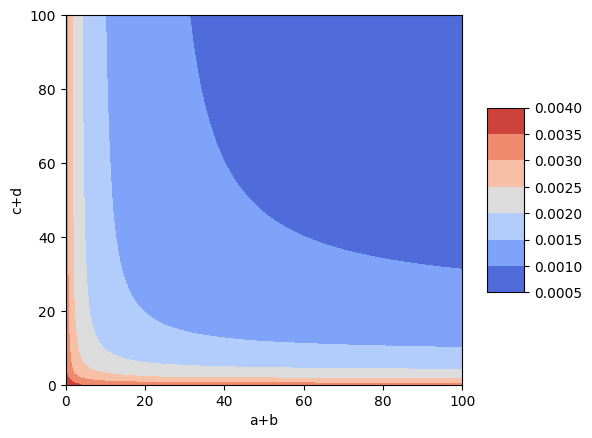}
        \caption{}
    \end{subfigure}
    \begin{subfigure}[b]{0.47\linewidth}
        \includegraphics[width=\linewidth]{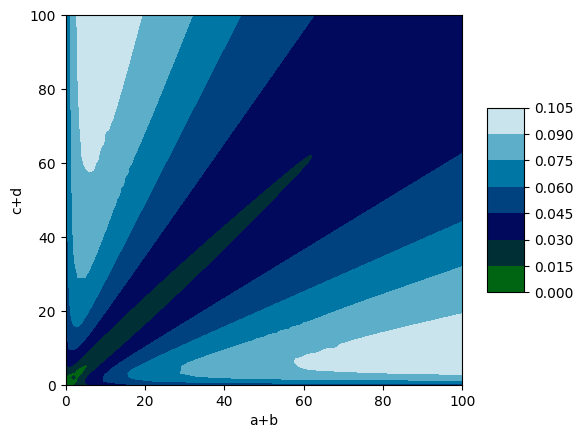}
        \caption{}
    \end{subfigure}
    \begin{subfigure}[b]{0.47\linewidth}
        \includegraphics[width=\linewidth]{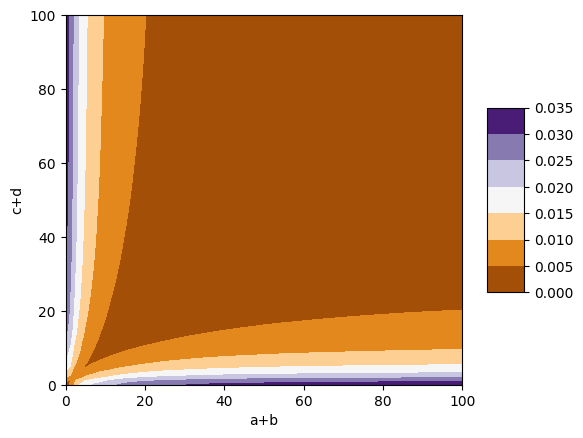}
        \caption{}
    \end{subfigure}
    \caption{(a) The logarithm of the time taken (in seconds) to compute or approximate $P(a,b,c,d)$ using different methods if $a=b=c=d$ as $n$ changes. For different fixed values of $a+b$ and $c+d$, representing the number of patients on each treatment arm: (b) the mean value (over all configurations) of $\mathbb{E}\left[|\hat{P}_{RS}(a,b,c,d)-P(a,b,c,d)|\right]$; (c) the largest value of $|P_{GA}(a,b,c,d)-P(a,b,c,d)|$; (d) the mean value of $|P_{GA}(a,b,c,d)-P(a,b,c,d)|$.}
    \label{fig:iso}
\end{figure}

Figures~\ref{fig:iso}(b)--(d) show the mean and worst-case absolute error given a fixed number of patients per arm induced by using GA to approximate $P$, as well as the mean absolute error induced by using RS. The latter was largest if the number of patients assigned to one arm was small, exceeding 0.003 as long as no patients were assigned to one of the arms. This is of particular concern for an RA procedure such as S-BRAR which can result in extreme allocations.

Moreover, regarding GA we can see that for $n\geq 113$, the absolute error can exceed 0.1, such that in general GA provides very weak accuracy guarantees. Further, a larger number of patients induces a greater error, which might not be expected given the results from \citet{cook_ga} and the intuition that a normal approximation of a beta distribution is more accurate if the parameters are larger. This is because while the number of patients on each arm is fixed at a possibly larger number, the number of either successes or failures (i.e. individual parameters) can be smaller provided the other is larger. Nevertheless, in the worst case, the treatment allocation ratio is highly improbable under S-BRAR given the number of successes and failures on each arm (for example $a,b,c,d = 85,8,0,7$ for $n=100$). From the mean-case, however, we still see smaller but significant errors arising, as for RS, at more extreme treatment allocation ratios. Those for GA are also still almost an order of magnitude larger than those for RS with $K=10^4$.

\subsection{Assessing the effect of approximating posterior probabilities in the full trial}

\subsubsection{Computational considerations}

We wish to distinguish the relative contributions of approximating posterior probabilities either to determine randomisation probabilities, or to determine test statistics. In practice, as we saw, one would approximate either both or neither, but the method of approximation for each need not be the same.

Indeed, while computing the set of all randomisation probabilities in a trial with $n$ patients can be done the same time as calculating $P(n,n,n,n)$ independently, if using GA or RS calculating $P$ at each update has a fixed cost, so that for $n=1000$ exact computation is now about 12 or 2290 times faster respectively, with this multiple even larger for smaller values of $n$. Since there is only one possible stopping point of the trial, however, calculating the test statistic at the end of the trial would only require one instance of this fixed cost, so that using GA for randomisation and RS for testing would be 23 times slower than exact calculation, while the other way around would be 2280 times slower. 

In particular, this suggests that though GA is much faster than RS, if RS helps with accuracy the cost of its use in testing may be smaller, while the cost of its use in randomisation is rather larger. Yet this advantage would diminish if the number of possible early stopping points were increased.

\subsubsection{Approximating for implementation of the randomisation procedure}

First, we report the impact on the OCs concerning within-trial patient benefit. The change in EPASA if GA was used to approximate posterior probabilities for randomisation over the trial compared to exact probabilities ranges from about -8.00 to 1.49 (i.e. from a decrease by 4.17\% to an increase by 0.757\%). The difference in both absolute and percentage terms is largest if one of $p_0,p_1$ is near 0.03 and the other is near 0.4, it is positive over most of the parameter space so long as neither response probability is too large, but reaches its minimum in a smaller area when one of $p_0,p_1$ is near 1 and the other is near 0.96. The corresponding change in VPASA ranges from about -878 to 242, but is positive on almost all the parameter space and only negative if one of $p_0,p_1$ is near 0 and the other is near 1. In percentage terms, however, the largest decreases in VPASA were by as much as 26.1\% and occurred if either $p_0$ or $p_1$ were near 1, and the largest increases were by as much as 55.0\% and occurred if one of $p_0,p_1$ was near 0.6 and the other near 1. To gain more intuition, the range for the corresponding change in standard deviation in the proportion of patients assigned to the superior treatment is about -0.0407 to 0.0107. Overall, we see that using GA for randomisation instead of exact calculation has a somewhat limited impact on patient benefit and that generally it leads to a slight increase in the mean number of patients within the trial benefiting but with a rather larger variance. As for using RS to approximate randomisation probabilities, we saw in the previous section that OCs could only be estimated using simulation. For EPASA and VPASA, we cannot directly use a Chernoff bound to guarantee the accuracy of such estimates as we do not know the distribution of the simulated outcomes and bounds on them are large, so that Hoeffding's inequality \citep{hoeffding} gives very weak guarantees. Nevertheless, we expect an even smaller impact on these metrics because of the greater accuracy of RS.

\begin{figure}
    \centering
    \begin{subfigure}[b]{0.47\linewidth}
        \includegraphics[width=\linewidth]{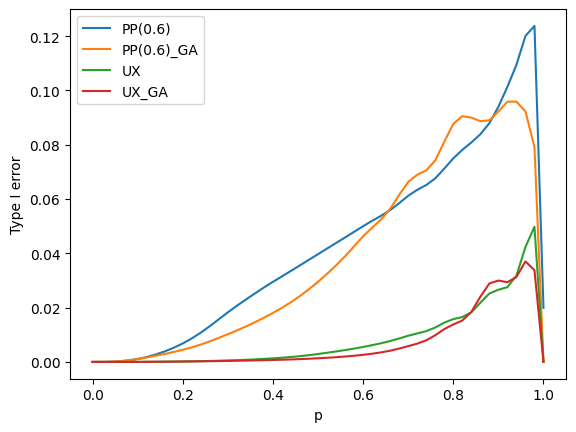}
        \caption{}
    \end{subfigure}
    \begin{subfigure}[b]{0.47\linewidth}
        \includegraphics[width=\linewidth]{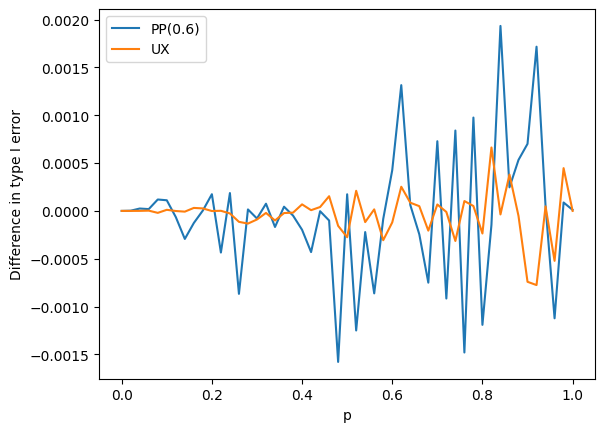}
        \caption{}
    \end{subfigure}
    \begin{subfigure}[b]{0.47\linewidth}
        \includegraphics[width=\linewidth]{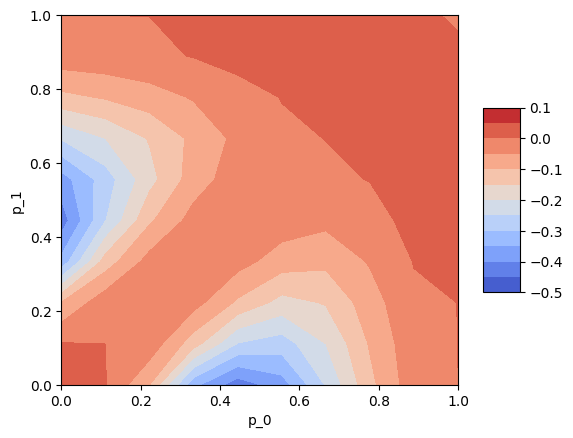}
        \caption{}
    \end{subfigure}
    \begin{subfigure}[b]{0.47\linewidth}
        \includegraphics[width=\linewidth]{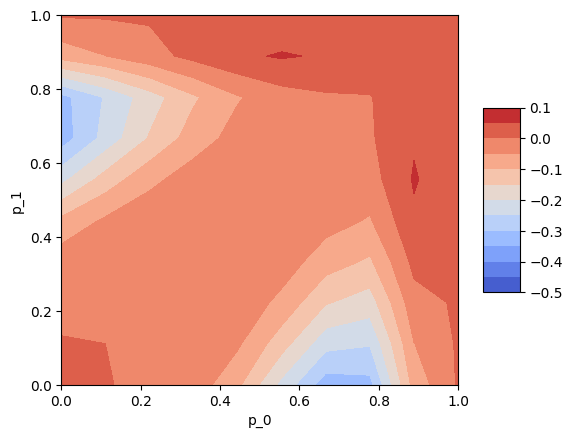}
        \caption{}
    \end{subfigure}
    \begin{subfigure}[b]{0.47\linewidth}
        \includegraphics[width=\linewidth]{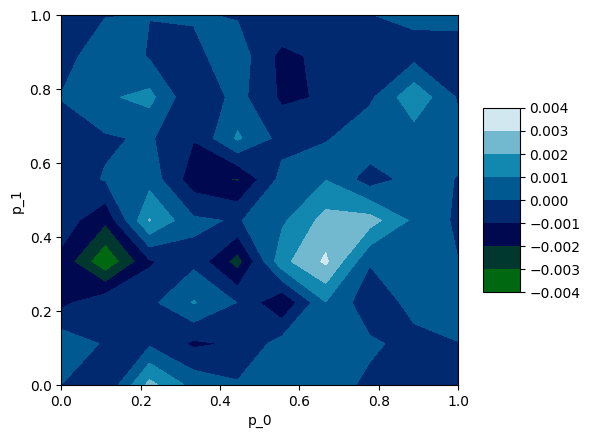}
        \caption{}
    \end{subfigure}
    \begin{subfigure}[b]{0.47\linewidth}
        \includegraphics[width=\linewidth]{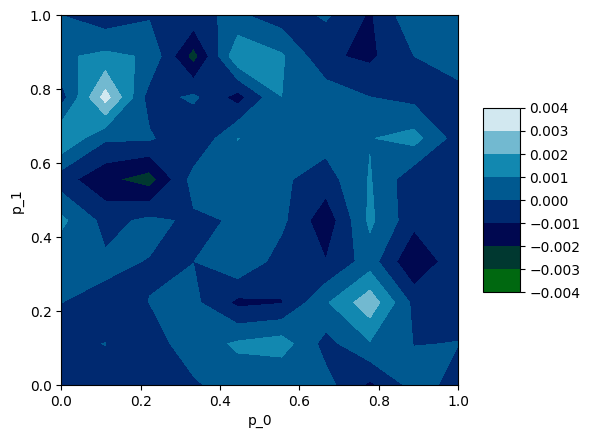}
        \caption{}
    \end{subfigure}
    \caption{For $n=200$ and different values of $p_0$ and $p_1$: (a) the type I error rate if $p_0=p_1=p$ for tests PP(0.6), UX using exact probabilities and GA for randomisation; (b) the estimated difference in type I error rate if RS or exact probabilities are used for randomisation; the power change from using GA for randomisation for the (c) PP(0.6), (d) UX, tests; and the estimated power change from using RS for randomisation for the (e) PP(0.6), (f) UX, tests.}\label{fig:rand}
\end{figure}

In terms of the impact of approximating randomisation probabilities on inference, Figure~\ref{fig:rand} shows how the type I error rate and power change for $N=200$ and different tests. For convenience we will omit $\mathcal{P}=\{n\}$ and $\alpha = 0.05$ for the notation for tests. In particular, Figure~\ref{fig:rand}(a) plots the type I error rate from PP(0.6) and UX, with and without GA for randomisation. We can see that GA can cause both type I error rate inflation and deflation, depending on the test and response probability $p$, and that for the same $p$ a different test can yield either inflation or deflation. The largest inflation for the PP test is 0.0127 and that for UX is 0.00369. In the case of UX, however, under GA the type I error rate remains below 0.05, and its maximal value decreases, so that in this sense GA can be said not to negatively affect the type I error rate. 

In Figure~\ref{fig:rand}(b) for the same tests the type I error rate for RS used for randomisation was simulated based on $10^5$ instances and the exact type I error rate was subtracted from this. Using Equation \eqref{eq:ks} we find that for $10^5$ samples, our estimate has a 95\% chance of being within 0.0043 of the true value, so there is not statistically significant evidence that using RS for randomisation causes type I error rate inflation, and in fact there is evidence that if such inflation exists it will not exceed 0.0063.

On the other hand, Figures~\ref{fig:rand}(c)--(d) plot the change in power from using GA for randomisation for the same PP(0.6) and UX tests respectively. As such, we see that for both there is power loss over most of the parameter space, and it can be substantial for some $p_0,p_1$, exceeding as much as 0.42 for PP(0.6). Overall, then, PP(0.6) and UX exhibit little type I error rate inflation, but at the cost of possible substantial power reduction with only a much smaller possible improvement in power depending on the response probabilities.

With the same estimation method as in Figure~\ref{fig:rand}(b), the estimated change in power from using RS for randomisation is plotted in Figures~\ref{fig:rand}(e)--(f). Since all the values are smaller in absolute value than 0.004, again there is no significant evidence of power reduction and evidence that any improvement in power does not exceed 0.0083. We can therefore conclude that while GA can have significant negative distorting effects on inference, this is not the case with RS. However, using RS for randomisation is much more computationally costly, so that there is a necessary trade-off in the approximation method chosen.

\subsubsection{Assessing the impact of approximating for statistical testing}

\begin{figure}
    \centering
    \begin{subfigure}[b]{0.47\linewidth}
        \includegraphics[width=\linewidth]{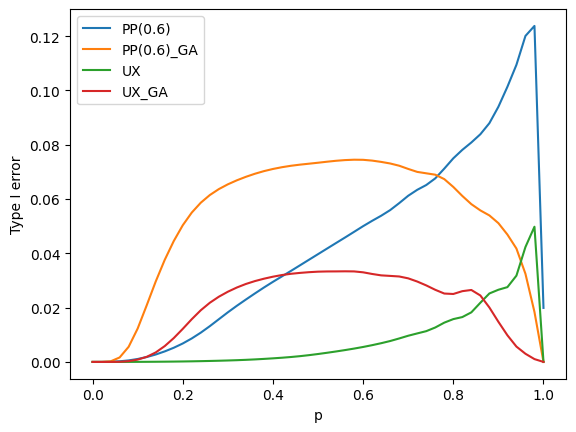}
        \caption{}
    \end{subfigure}
    \begin{subfigure}[b]{0.47\linewidth}
        \includegraphics[width=\linewidth]{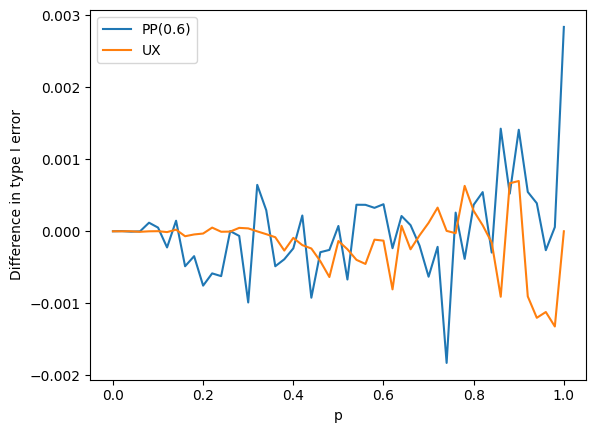}
        \caption{}
    \end{subfigure}
    \begin{subfigure}[b]{0.47\linewidth}
        \includegraphics[width=\linewidth]{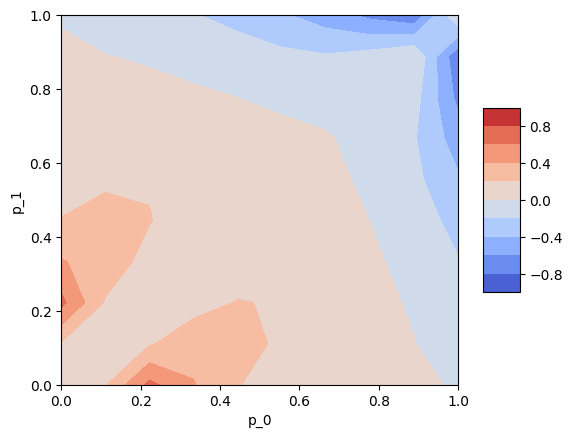}
        \caption{}
    \end{subfigure}
    \begin{subfigure}[b]{0.47\linewidth}
        \includegraphics[width=\linewidth]{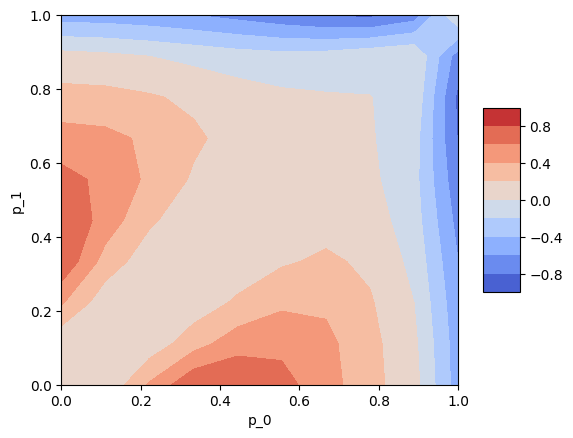}
        \caption{}
    \end{subfigure}
    \begin{subfigure}[b]{0.47\linewidth}
        \includegraphics[width=\linewidth]{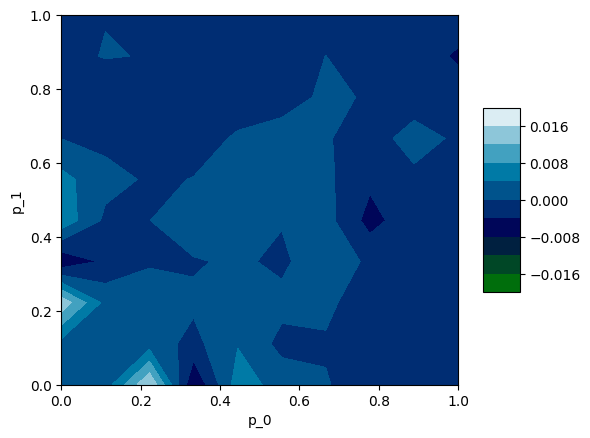}
        \caption{}
    \end{subfigure}
    \begin{subfigure}[b]{0.47\linewidth}
        \includegraphics[width=\linewidth]{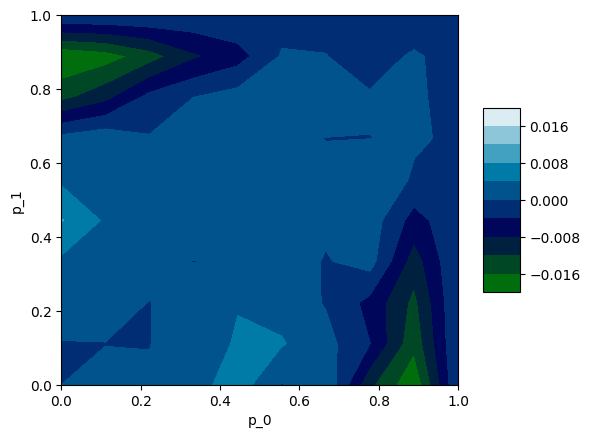}
        \caption{}
    \end{subfigure}
    \caption{For $n=200$ and different values of $p_0$ and $p_1$: (a) the type I error rate if $p_0=p_1=p$ for tests PP(0.6), UX using exact probabilities and GA for testing; (b) the estimated difference in the type I error rate if RS or exact probabilities are used for testing; the power change from using GA for testing for the (c) PP(0.6), (d) UX, tests; and the estimated power change from using RS for testing for the (e) PP(0.6), (f) UX, tests.}\label{fig:test}
\end{figure}

If the test statistic $T_{1}=1-T_{0}$ is approximated, we can look at the impact made by both RS and GA on the type I error rate and power for both PP(0.6) and UX, and $n=200$, as is done in Figure~\ref{fig:test}. Indeed, Figure~\ref{fig:test}(a) shows the graph of the type I error rate with and without GA for these tests, and we can see that unlike in Figure~\ref{fig:rand}(a) the impact is both more significant and alters the qualitative behaviour of the graph. For both tests, large $p$ show substantial error deflation while for other $p$ not near 0 there is large error inflation, reaching 0.048 for PP(0.6) and 0.031 for UX. As for RS, Figure~\ref{fig:test}(b) plots the same quantity as Figure~\ref{fig:rand}(b) but with RS used for testing instead of randomisation. For both tests, the difference in the type I error rate is less than 0.003 so there is no statistically significant evidence of error inflation.

Further, Figures~\ref{fig:test}(c)--(d) show the change in power from using GA for inference for these same two tests respectively. In both cases the power can change dramatically, increasing over most of the space, especially in (d), but decreasing if either $p_0$ or $p_1$ is large. Under the PP(0.6) test, the power increase is largest if one of $p_0,p_1$ is close to 0 while the other is close to 0.2, and reaches 0.65, while the decrease is largest if one of $p_0,p_1$ is close to 1 while the other is close to 0.9, and reaches 0.75. For the UX test these values are even larger, with the largest power increase of 0.81 when one of $p_0,p_1$ is close to 0 while the other is close to 0.4, and the largest power decrease of 0.84 when one of $p_0,p_1$ is close to 1 while the other is close to 0.8. It is also noteworthy that though this correspondence is rather inexact, the response probabilities that improve power when GA is used for randomisation are also typically those where power is also reduced when GA is used for testing.

Finally, Figures~\ref{fig:test}(e)--(f) show the estimated changes in power from using RS for testing. For both tests, larger response probabilities typically induce power loss while smaller ones typically induce power gain. Nevertheless, we can see that the behaviour becomes qualitatively different depending on the test used, with PP(0.6) having a maximal power gain of almost 0.02 and a maximal power loss of only at most 0.008, and the reverse for UX. Even subtracting 0.0043 from each of these to account for the estimates' 95\% confidence intervals, we can see that RS can lead to non-trivial power increases and decreases for the PP(0.6) and UX tests respectively, though still much smaller than the corresponding effect of using GA.

As a result, if we wish to keep power similar to the original trial design, using RS to calculate test statistics is much better than GA, and the additional computational cost of this is less significant. Yet some bias is still introduced as compared with the use of RS for randomisation, so that the lower computational penalty comes at the cost of accuracy. Meanwhile, though GA is computationally cheaper, it also induces significant distortions especially if used for test statistics, likely because a small error in the test statistic can directly lead to a different decision at the end of the trial in a way that is not possible if randomisation probabilities are slightly changed. In the case of UX however, the type I error rate remains controlled at the required level over the whole parameter space and there are substantial power gains for smaller response probabilities, so that for this test GA may actually be advisable if $p_0$ and $p_1$ are unlikely to be large. As such, even if the same Bayesian-based test statistic is used, it is important to also consider the objectives of the test, as this can make a difference to the suitability of different approximation methods.

\section{Case Study: The ESET trial}
\label{s:cs}

Real trials are likely to be more complex than the set-up from the previous section, so we consider an example with a larger number of patients, more than two treatment arms, randomisation in blocks, an initial period of equal randomisation called the \textit{burn-in}, possible early stopping, and tuning. We aim to deduce practical recommendations for the appropriate choice of calculation method of posterior probabilities in a realistic situation.

The \textit{Established Status Epilepticus Treatment} (ESET) trial \citep{ESET} compares the efficacy of available treatments for benzodiazepine-refractory status epilepticus by indentifying the best and/or worst treatment among fosphenytoin (fPHT), levetiracetam (LVT), and valproic acid (VPA) in patients older than 2 years. Since all these treatments are in use there is no control, and for convenience we label them 0, 1, and 2 respectively. Up to 795 patients are recruited, but to account for re-enrolment, treatment crossover, and missing data, 720 unique patients were assumed to have data for final analysis. After randomising the first 300 patients equally to each treatment arm, T-BRAR with the variance-scaling tuning procedure and parameter $m=2$ in Equation \eqref{eq:vs} is used but with randomisation probabilities only updated every 100 patients. 

The trial considered the null hypothesis for which there is neither a unique best nor unique worst of three treatments, and alternative hypotheses representing either one treatment being best, one being worst, or both, so that 
\begin{align*}
    H_0: p_0=p_1=p_2; \: & H_{\overline{j}}: p_j>p_{j'} \text{ for }j'\neq j; \\ H_{\underline{j}}: p_j<p_{j'} \text{ for }j'\neq j; \: & H_{\overline{j},\underline{j}'}: p_j>p_{j''}>p_j \text{ for }j''\neq j,j'.
\end{align*}
Moreover, we define a test statistic to correspond to a particular treatment being worst,
$T_j'({\bm H}_i):=\mathbb{Q}^i(p_{j}<\min_{j'\neq j}p_{j'})$.
This can be calculated in the same way as $T_j$ since $\mathbb{Q}^i(p_{j}<\min_{j'\neq j}p_{j'})=\mathbb{Q}^i(1-p_{j}>\max_{j'\neq j}1-p_{j'})$. If no early stopping has occurred, at the end of the trial the null hypothesis is rejected in favour of $H_{\overline{j},\underline{j}'}$ if $T_j({\bm H}_n)>0.975$ and $T_{j'}'({\bm H}_n)>0.975$ where $j$ and $j'$ are distinct arms that have not been dropped representing the best and worst treatment respectively. Otherwise, the null hypothesis is rejected in favour of $H_{\overline{j}}$ if $T_j({\bm H}_n)>0.975$, or of $H_{\underline{j}}$ if  $T_j'({\bm H}_n)>0.975$, where $j$ is defined as before.

Interim analyses are carried out after each block of 100 patients is randomised according to the T-BRAR procedure, so that this is first done after 400 patients have been allocated treatment. If the posterior probability of treatment $j$ being optimal at that point, i.e. $T_j({\bm H}_i)$, is greater than 0.975, the null hypothesis is rejected in favour of $H_{\overline{j}}$. As such, the design prioritises stopping the trial early having identified the superior treatment over identifying the inferior treatment. Further, if at an interim analysis $\mathbb{Q}^i(p_j<0.25)\geq 0.95$, then treatment arm $j$ is dropped. If all treatments are dropped, the trial stops for futility. It should be noted that the original design has a further futility stopping rule \citep[outlined on p. 91 of][]{ESET} that has been omitted here for simplicity.

We are interested in the effect of changing the frequency of interim adaptation/analysis, which we call $b$, and the number of patients equally assigned to each arm at the start of the trial, which we call $B$. The configuration previously described corresponds to $B=b=100$. In this trial, prior to randomising the next block of patients, it is determined whether or not the trial can be stopped early by computing the test statistics. As such, there is no computational advantage to calculating the randomisation probabilities faster but less accurately, since other than at the first block of patients these probabilities are just test statistics that would already have been calculated using a slower and more accurate method.

\begin{table}
    \caption{Maximal time (in seconds) taken to compute 10,000 ESET trials with different posterior probability calculation methods, burn-in period lengths $B$, and frequencies of interim analyses $b$.}
    \label{tab:comp_time}
    \centering
    \begin{tabular}{cl r r r r} 
    \cline{3-6}
    \noalign{\vskip\doublerulesep\vskip-\arrayrulewidth}
    \noalign{\vskip\doublerulesep\vskip-\arrayrulewidth}
    \cline{3-6}
    \multicolumn{2}{r}{} & $b$=100 & 20 & 5 & 1 \\ 
    \hline
    \hline
    \multirowcell{3}{Exact} & $B$=100 & 196 & 194 & 205 & 281 \\
    & 50 & 193 & 200 & 209 & 294 \\
    & 0 & 191 & 196 & 216 & 290 \\ 
    \hline
    \multirowcell{3}{GA} & 100 & 8.96 & 28.5 & 106 & 529 \\
    & 50 & 9.99 & 43.0 & 152 & 692 \\
    & 0 & 12.7 & 49.7 & 189 & 853 \\ 
    \hline
    \multirowcell{3}{RS} & 100 & 855 & 2740 & 10100 & 49300 \\
    & 50 & 967 & 3720 & 14000 & 64500 \\
    & 0 & 1210 & 4540 & 17100 & 76500 \\ 
    \hline
    \hline
    \end{tabular}
\end{table}

Table~\ref{tab:comp_time} compares how long it takes to compute $10^4$ simulations of an ESET trial (in the computationally-worst case it doesn't stop early for each simulation) for different values of $B$, $b$, and different methods if they are used both for randomisation and testing. We can see that generally for non-exact methods, as $B$ decreases the computational time increases, though by less than a factor of 2, while for exact calculation the change is less significant. As for decreasing $b$, again this leads to slower computation in general, though the effect is much smaller for exact calculation. For the other methods, however, the computational penalty of a more sequential trial (i.e. a smaller $b$) is particularly stark, with $b=1$ being about as much as 67 times slower than $b=100$ if either GA or RS is used. Further, using GA is faster than exact calculation for $b\geq 5$, but slower if $b=1$, with exact calculation performing better for smaller $B$. Using RS, on the other hand, is much slower than exact calculation for the range of $B$ and $b$ considered, with computation slowed down by about as much as 264 times. This makes it particularly difficult to use simulations to accurately calculate various operating characteristics, or to explore a range of different response probabilities. For example, for a fully sequential trial even with $B=100$, it takes about 137 hours to calculate type I error rate for a single set of response probabilities to the accuracy used in Table~\ref{tab:rej_rate}.

\begin{table}
    \caption{Type I error rate and power of ESET trials based on $10^5$ simulations with different randomisation and testing calculation methods, burn-in period lengths $B$, and frequencies of interim analyses $b$. The nominal significance level~$\alpha$ was set to~$5\%.$
    \label{tab:rej_rate}}
    \centering
    \begin{tabular}{cl r r r r r r r r} 
    \cline{3-10}
    \noalign{\vskip\doublerulesep\vskip-\arrayrulewidth}
    \noalign{\vskip\doublerulesep\vskip-\arrayrulewidth}
    \cline{3-10}
    \multicolumn{2}{r}{} & \multicolumn{4}{c}{\centering Exact} & \multicolumn{4}{c}{\centering GA} \\
    \cline{3-10}
    \multicolumn{2}{r}{} & $b$=100 & 20 & 5 & 1 & 100 & 20 & 5 & 1 \\ 
    \hline
    \hline
    \multirowcell{3}{Type I error rate\\ for $p_0,p_1,p_2$ \\ = 0.5,0.5,0.5 (\%)} & $B$=100 & 3.80 & 5.06 & 5.80 & 6.50 & 4.18 & 5.51 & 6.36 & 6.93 \\
    & 50 & 4.82 & 6.51 & 7.87 & 8.69 & 5.14 & 7.15 & 8.57 & 9.38 \\
    & 0 & 5.88 & 8.97 & 11.52 & 13.90 & 6.46 & 11.44 & 15.33 & 18.85 \\ 
    \hline
    \multirowcell{3}{Power for \\ $p_0,p_1,p_2$ \\ = 0.5,0.5,0.65 (\%)} & 100 & 90.73 & 92.34 & 93.12 & 93.64 & 90.83 & 92.28 & 93.33 & 93.68 \\
    & 50 & 90.97 & 92.91 & 93.61 & 94.01 & 91.01 & 92.78 & 93.51 & 94.03 \\
    & 0 & 91.53 & 92.90 & 93.41 & 93.53 & 91.51 & 92.77 & 93.07 & 92.67 \\ 
    \hline
    \multirowcell{3}{Power for \\ $p_0,p_1,p_2$ \\ = 0.5,0.65,0.65 (\%)} & 100 & 66.19 & 63.17 & 61.48 & 60.23 & 66.45 & 63.45 & 61.74 & 60.70 \\
    & 50 & 62.89 & 58.47 & 55.89 & 54.63 & 62.95 & 58.94 & 56.49 & 54.92 \\
    & 0 & 60.72 & 54.88 & 51.26 & 48.36 & 60.86 & 54.96 & 50.50 & 47.23 \\ 
    \hline
    \hline
    \end{tabular}
\end{table}

Since we saw that calculating the type I error rate or power to a suitable accuracy is not computationally feasible for more sequential trials if RS is used for randomisation, we will consider the impact on accuracy of GA as compared with exact calculation. For inferential concerns, Table~\ref{tab:rej_rate} shows the type I error rate and power of an ESET trial for different values of $B$ and $b$, depending on whether GA or exact calculation is used. Each of its entries is obtained using $10^5$ simulations, so that the same confidence interval of radius 0.0043 as in the previous section applies, while noting that this is in fact a conservative bound if the probabilities are small as in the case of the type I error rate. For example, if the estimated probability is less than 0.05 or 0.1, the radius can be shrunk to 0.022 or 0.034 respectively. The type I error rate was considered for $p_0=p_1=p_2=0.5$ since this was the null scenario the original trial was designed under. We can establish that in this case, for the values considered other than $B,b=$100,100 or 100,20 or 50,100, using GA instead of exact calculation induces statistically significant type I error rate inflation. This error rate inflation is worst for the smallest $B$ or $b$, both in absolute and percentage terms, so that this cannot be attributed simply to the underlying type I error rate increasing as $B$ and $b$ are varied. It is especially bad if there is no burn-in period, in which case the type I error rate increases by 0.0247 (a 27.6\% increase) if $b=20$, and by 0.495 (a 35.6\% increase) if $b=1$.

Comparing power when using GA instead of exact calculation, if $p_0,p_1,p_2=0.5,0.5,0.65$ as in the original trial designed to have a power exceeding 0.9, there is no statistically significant improvement. On the other hand, for a fully sequential trial with $B=0$ using GA leads to a decrease in power of 0.0086. As such, the increased type I error rate still leads at worst to a reduction in power and at best to an insignificant improvement. This is also the case if we aim to identify the inferior treatment arm when $p_0,p_1,p_2=0.5,0.65,0.65$.

In terms of the patient benefit metrics EPASA and VPASA, as for power and type I error rate the large number of states, induced by the large trial size, prohibits exact calculation. Relying on simulations, however, prohibits finding a sharp bound for the estimation error since these OCs can take large values. Nevertheless, even in the case when using GA instead of exact calculation alters the trial most significantly, namely when $B=0$ and $b=1$, using GA instead of exact calculation for $p_0,p_1,p_2=0.5,0.5,$ $0.65$ only increased the EPASA by 0.933 and the VPASA by 3026 (increases by 0.16\% and 21\% respectively) corresponding to an increase in the standard deviation in the proportion of patients assigned to the superior treatment of 0.0167, on the basis of $10^5$ simulations. As such, as in the previous section the impact of approximating posterior probabilities on patient benefit is rather limited, but GA does improve the average outcome slightly by increasing its variance rather more. This effect is, as expected, less severe if either $b$ or $B$ are increased, for example with the EPASA increasing by 0.440 and the standard deviation in the proportion of patients assigned to the superior treatment increasing by $4.01\times10^{-5}$ if $B=b=100$.  

\section{Discussion: Toward general recommendations}
\label{s:discuss}

Overall, in this paper we have introduced a new algorithm for efficiently and exactly computing posterior probabilities in a BRAR trial with integer Beta priors. This allowed us to have a benchmark against which to compare different approximation methods for the calculation of these probabilities using a range of metrics, and so to quantify their impact on patient benefit and inference. In particular, we showed this impact to be significant enough to warrant careful thought about the choice of method prior to designing a trial, especially for more sequential or aggressively adaptive designs that use these probabilities.

It can be observed, from comparing the results in Sections \ref{s:sim} and \ref{s:cs}, that the accuracies of different posterior probability computation methods, especially relative to each other, are not substantially changed by different numbers of treatment arms. In particular, assuming uniform priors are used, the maximal number of patients in the trial is fixed, and interim analyses are used both to randomise patients in blocks and for early stopping:
\begin{itemize}
    \item Using GA can pose a severe risk of type I error rate inflation and power loss, both for more frequently adaptive trials, but especially if a short burn-in is used.
    \item Using RS with 10,000 iterations can cause some power loss for frequently adapting trials with a short burn-in, though only to a rather limited extent and for specific response probabilities.
\end{itemize}

Nevertheless, because of the range of metrics of interest (such as type I error rate inflation, power loss, and change in EPASA or VPASA) whose accuracy could be distorted by approximation, as well as the significant variation in these based on the success probabilities, it is not straightforward to provide a single objective function for accuracy which could then be balanced against computational time in an optimisation framework. Moreover, while practitioners could use the framework outlined here to compare the different computational methods in their specific set-up, this would involve having carried out all the computations exactly, so that having borne this (possible) computational cost, there is no need to use approximation methods as well.

In order to make heuristic recommendations for choosing a method of computation in a general multi-armed BRAR trial, therefore, we need to assess their relative computational speeds for any number of treatment arms, as opposed only two or three arms as in Sections \ref{s:sim} and \ref{s:cs}. Assuming uniform priors are used, the maximal number of patients in the trial is fixed, and interim analyses are used both to randomise patients in blocks and for early stopping, by Theorem \ref{thm} and by construction,
\begin{equation*}
    \text{Maximal computation time}=\begin{cases}
        nf_{EX}(k) &\text{for exact calculation}, \\
        \lceil\frac{n-kB}{b}\rceil f_{GA}(k) &\text{for GA}, \\
        \lceil\frac{n-kB}{b}\rceil Kc_{RS} &\text{for RS},
    \end{cases}
\end{equation*}
for some functions $f_{EX},f_{GA}$ and constant $c_{RS}$. While these may vary slightly depending on features such as the tuning procedure or an additional inferiority test as in Section \ref{s:cs}, using the same computer as in Section \ref{s:sim} and S-BRAR with $n=20$, $B=0$, $b=1$ and averaging over 100 iterations, we find that
\begin{align*}
    -\log f_{EX}(2,3,\dots,8,12,13) &\approx  (10.8,10.0,9.32,8.54,7.73,6.87,6.05,2.83,2.03), \\
    -\log f_{GA}(2,3,\dots,8) &\approx (8.93,8.60,5.92,4.74,3.83,2.71,1.94), \\
    c_{RS} &\approx 7\times 10^{-4}.
\end{align*}
These values can be used, or similarly found for other $k$, to estimate computation times and balance this against the aforementioned accuracy considerations. Note in particular that $f_{EX}\approx e^{0.803k-12.49}$. Also, if $k$ is increased with $kB$, the total number of patients assigned to the burn-in phase, kept constant, the maximal computation time for GA will increase more rapidly than for exact computation, so that in this respect exact calculation improves relative to GA as the number of treatment arms increases.

\subsection{Guidance}

Exact calculation of posterior probabilities in a BRAR trial has the advantage of not needing to be chosen like the number of simulations in RS, or to be verified as being safe like GA due to its significant potential distortions, but can sometimes come at a computational cost. To tune RS one can consult Equation \eqref{eq:rs} and set $K$ based on the error desired. The potential adverse inferential errors induced by approximation depend substantially on the success probabilities, and can lead to different recommendations based on the aim of the test. Nevertheless, a less accurate approximation has a more limited impact in randomisation probabilities than in test statistics, though both can be significant. As for patient benefit, approximation does not significantly impact this, though for more aggressively adaptive trials GA improves average patient prospects with increased variability. 

\begin{table}
    \caption{Heuristic recommendations for a method to compute posterior probabilities in a BRAR trial based on: the number of treatment arms; whether accuracy, computational speed, or a mixture are prioritised; the length of the burn-in period; and the frequency of interim analyses at which patients are randomised and early stopping is permitted.}
    \label{tab:summary}
    \centering
    \begin{tabular}{ccrrrr} 
    \cline{3-6}
    \noalign{\vskip\doublerulesep\vskip-\arrayrulewidth}
    \noalign{\vskip\doublerulesep\vskip-\arrayrulewidth}
    \cline{3-6}
    \multicolumn{2}{r}{} & \multicolumn{2}{c}{\centering Infrequent interim analyses} & \multicolumn{2}{c}{\centering Frequent interim analyses} \\
    \cline{3-6}
    \multicolumn{2}{r}{} & Longer burn-in & Shorter burn-in  & Longer burn-in & Shorter burn-in \\ 
    \hline
    \hline
    \multirowcell{3}{$\leq7$ arms} & Acc. & \textbf{Exact} & \textbf{Exact} & \textbf{Exact} & \textbf{Exact} \\
    \cline{3-6}
    & Mix & GA & \textbf{Exact} & \textbf{Exact} & \textbf{Exact} \\
    \cline{3-6}
    & Comp. & GA & \textbf{Exact}/GA & \textbf{Exact}/GA & \textbf{Exact} \\
    \hline
    \multirowcell{3}{8-12 arms} & Acc. & \textbf{Exact} & \textbf{Exact} & \textbf{Exact} & \textbf{Exact} \\
    \cline{3-6}
    & Mix & RS & \textbf{Exact} & \textbf{Exact} & \textbf{Exact} \\
    \cline{3-6}
    & Comp. & RS & \textbf{Exact}/RS & \textbf{Exact}/RS & \textbf{Exact} \\
    \hline
    \multirowcell{3}{$\geq13$ arms} & Acc. & \textbf{Exact} & \textbf{Exact} & \textbf{Exact} & \textbf{Exact} \\
    \cline{3-6}
    & Mix & RS & RS & RS & \textbf{Exact}/RS \\
    \cline{3-6}
    & Comp. & RS & RS & RS & RS \\
    \hline
    \hline
    \end{tabular}
\end{table}

If $K$ is set to $10^4$ for RS, uniform priors are used, the maximal number of patients in the trial is fixed, and interim analyses are used both to randomise patients in blocks and for early stopping, Table~\ref{tab:summary}, obtained by evaluating computational times for different $b,B$ and balancing these against the accuracy considerations outlined in this section, presents a heuristic to choose the computation method considering the number of treatment arms, the frequency of randomisation, the length of the burn-in period, as well as different possible priorities. ``Acc." corresponds to sacrificing computational efficiency for the sake of accuracy as might be the case in a confirmatory trial, while ``Comp." corresponds to prioritising computational speed over accuracy (both within reasonable bounds) as might be the case when wanting to explore a larger parameter space for an exploratory trial. ``Mix" corresponds to a desire to balance these considerations. 

A key takeaway is that in a large fraction of cases, and in particular for trials with frequent interim analyses, a small burn-in period, and at most 12 treatment arms, exact evaluation is likely most suitable regardless of the priority chosen. From the database of BRAR trials compiled by \citet{Pin2024software}, we can see that most of these are in fact multi-arm, so that the new exact computation method in this paper is indeed of some practical utility. Moreover, we see that there are BRAR trials with more than 7, and also more than 12 arms, so that these distinctions are not purely theoretical.

\section*{Acknowledgements}

\sloppy
This work was supported by the UK Medical Research Council [grant numbers MC\_UU\_00002/15, MC\_UU\_00040/03, MC\_UU\_00002/14], MRC Biostatistics Unit Core Studentship and the Cusanuswerk e.V. (to LP).  SSV discloses an advisory role with PhaseV. For the purpose of open access, the author has applied a Creative Commons Attribution (CC BY) licence to any Author Accepted Manuscript version arising.

\appendix

\renewcommand{\thesection}{\Alph{section}} 
\makeatletter
\renewcommand\@seccntformat[1]{\appendixname\ \csname the#1\endcsname.\hspace{0.5em}}
\makeatother

\section{Proofs and auxiliary results}
\label{app1}
\subsection{Proof of Theorem \ref{thm}}\label{proof_thm}
\begin{proof}
    First, we will show that
    \begin{equation}\label{eq:f}
        f(i,j)=\frac{1}{i+1}+i\sum_{j'=1}^{j-1}\left(\frac{B(i+j',j'+2)}{j'B(j',j'+1)}-\frac{B(i+j',j'+1)}{j'B(j',j')}\right)
    \end{equation}
    where $f(i,j):=\mathbb{P}\left(X>\max_{i'=1,\dots,i}U_{i'}\right)$ for $X\sim\text{Beta}(j,j)$ and $U_{i'}\sim\text{Uniform}(0,1)$ all independent. Indeed, $\max_{{i'}=1,\dots,i}U_{i'}$ has probability density function $\mathbbm{1}(u\in(0,1))iu^{i-1}$, so
    \begin{multline*}
        f(i,j)=\int_0^1ix^{i-1}(1-I_x(j,j))dx=1-\int_0^1ix^{i-1}I_x(j,j)dx \\ = 1-\int_0^1ix^{i-1}\left(I_x(j-1,j)-\frac{x^{j-1}(1-x)^{j}}{(j-1)B(j-1,j)}\right)dx \\ = 1-\int_0^1ix^{i-1}\left(I_x(j-1,j-1)+\frac{x^{j-1}(1-x)^{j-1}}{(j-1)B(j-1,j-1)}-\frac{x^{j-1}(1-x)^{j}}{(j-1)B(j-1,j)}\right)dx \\ = f(i,j-1) - \frac{iB(i+j-1,j)}{(j-1)B(j-1,j-1)}+\frac{iB(i+j-1,j+1)}{(j-1)B(j-1,j)},
    \end{multline*}
    where we have used the same recursive formulas for the regularised incomplete beta function as in the proof of Lemma \ref{lem:rec}. Moreover, by symmetry $f(i,1)=\frac{1}{i+1}$. Induction then gives us Equation \eqref{eq:f}. From this, we can see that lines 2--10 of Algorithm \ref{algo} compute $P^{(-n')}(S)=P(\bm{1}_{2k};S)$ in $\mathcal{O}(k2^k)$ operations, where $\bm{1}_{2k}$ is a vector of $2k$ ones.

    Further, supposing in the for loop from lines 12--33 we assumed that $P^{(n)}(S)=P(\bm{x}^n;S)$, then by Lemma \ref{lem:rec}
    \begin{align*}
        P^{(n+1)}(S)&=P(\bm{x}^n;S)+\begin{cases}
            (-1)^{s+1}\sum_{j'\notin S}\frac{b_{j'}(\bm{x}^n;S)P(\bm{x}^n;S\cup\{j'\})}{\sum_{j''\in S}x_{2j''+s}^n} &\text{if $j\in S$}\\ (-1)^s\frac{b_j(\bm{x}^n;S)P(\bm{x}^n;S\cup\{j\})}{x_{2j+s}^n} &\text{if $j\notin S$}
        \end{cases} \\ &= P(\bm{x}^{n+1};S).
    \end{align*}
    Thus, by induction, $P^{(n)}(\{j\})=P(\bm{x}^{(n)};\{j\})$ for all $j\in\{0,\dots,k-1\}$ and $n\in\{0,\dots,n\}$, as required. Moreover, since one single such update takes $\mathcal{O}(k)$ operations, and there are $\mathcal{O}(2^k)$ possible subsets $S$, one run of this for loop takes $\mathcal{O}(k2^k)$ operations. There are $n+n'$ such updates, so overall the algorithm takes $\mathcal{O}(k2^k(1+n+n'))=\mathcal{O}\left(k2^k\left(n-2k+\sum_{i=0}^{2k-1}x^0_i\right)\right)$ operations.
\end{proof}
\subsection{Proof of Lemma \ref{lem:rec}}\label{proof_lemma_diff}
\begin{proof}
    First, we define the regularised incomplete beta function $I_x$ as the cumulative distribution function of the Beta distribution, and recall that (by integrating by parts)
    \begin{equation*}\label{eq:lem1}
        I_x(a,b)=\frac{1}{aB(a,b)}x^a(1-x)^b+I_x(a+1,b).
    \end{equation*}
    As a result, for $s=0$ we get that for any $0\leq j,j'\leq k-1$ s.t. $j\neq j'$,
    \begin{multline*}
        P(\bm{x};\{j'\})=\int_0^1\frac{p^{x_{2j'}}(1-p)^{x_{2j'+1}}}{B(x_{2j'},x_{2j'+1})}\prod_{j''\neq j'}I_p(x_{2j''},x_{2j''+1})dp \\
        = \int_0^1\frac{p^{x_{2j}+x_{2j'}}(1-p)^{x_{2j+1}+x_{2j'+1}}}{x_{2j}B(x_{2j},x_{2j+1})B(x_{2j'},x_{2j'+1})}\prod_{j''\neq j,j'}I_p(x_{2j''},x_{2j''+1})dp + P(\bm{x}+\bm{e}_{2j};\{j'\}) \\
        = P(\bm{x}+\bm{e}_{2j};\{j'\}) + \frac{B(x_{2j}+x_{2j'},x_{2j+1}+x_{2j'+1})P(\bm{x};\{j,j'\})}{x_{2j}B(x_{2j},x_{2j+1})B(x_{2j'},x_{2j'+1})}
    \end{multline*}
    as required. Now, using this result,
    \begin{multline*}
        P(\bm{x}+\bm{e}_{2j};\{j\})-P(\bm{x};\{j\})=\left(1-\sum_{j''\neq j}P(\bm{x}+\bm{e}_{2j};\{j\})\right)-\left(1-\sum_{j''\neq j}P(\bm{x};\{j\})\right) \\
        = -\sum_{j''\neq j}\left(P(\bm{x}+\bm{e}_{2j};\{j\})-P(\bm{x};\{j\})\right) = -\sum_{j''\neq j}\frac{b_{j,j''}(\bm{x})}{x_{2j}}P(\bm{x};\{j,j''\}).
    \end{multline*}
    The case where $s=1$ is analogous.
\end{proof}

\subsection{Additional results}
In the paper, we made use of the following straightforward lemmas:
\begin{lemma}\label{lem:prior_decomp}
Let~$s_{i,j},n_{i,j}$ be the realisations of~$S_{i,j},N_{i,j}$. $\mathbb{Q}^i\equiv\mathbb{Q}^{i}_{0}\times\ldots\times\mathbb{Q}^{i}_{k-1}$ where $\mathbb{Q}^{i}_{j}$ for $0\leq j<k$ is defined by
\begin{equation}
    \mathbb{Q}^{i}_{j}(E):=\frac{\int_{B}x_{j}^{s_{i,j}}(1-x_{j})^{n_{i,j}-s_{i,j}}\mathbb{Q}_{j}(dx_{j})}{\int_0^1 x_{j}^{s_{i,j}}(1-x_{j})^{n_{i,j}-s_{i,j}}\mathbb{Q}_{j}(dx_{j})}
\end{equation}
for some $\mathbb{Q}_{j}$-measurable set $E\subseteq[0,1]$. As a result, it follows that
\begin{equation}
    \pi^{\mathbb{Q}}_{\text{S-BRAR},j}({\bm H}_i)=\frac{\int_{x_j>\max_{j'\neq j}x_{j'}}\prod_{j'=0}^{k-1}x_{j'}^{s_{i,j'}}(1-x_{j'})^{n_{i,j'}-s_{i,j'}}\mathbb{Q}_{j'}(dx_{j'})}{\int_{[0,1]^k}\prod_{j'=0}^{k-1}x_{j'}^{s_{i,j'}}(1-x_{j'})^{n_{i,j'}-s_{i,j'}}\mathbb{Q}_{j'}(dx_{j'})}.
\end{equation}
\end{lemma}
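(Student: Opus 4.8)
The plan is to obtain $\mathbb{Q}^i$ by a direct application of Bayes' rule, exploiting two factorizations: the prior $\mathbb{Q}=\mathbb{Q}_0\times\cdots\times\mathbb{Q}_{k-1}$ is a product by assumption, and the likelihood of the observed history $\bm{H}_i$ factorizes across arms once the allocation indicators are accounted for. The single genuinely structural point is that, because $\pi_j(\bm{H}_{i'})$ is a function of the past history alone, the allocation mechanism is ignorable and contributes to the likelihood only a factor that does not involve $\bm{p}$.

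First I would write down the joint probability mass function of $\bm{H}_i=(\bm{A}_1,Y_1,\ldots,\bm{A}_i,Y_i)$ under $\mathbb{P}^\pi$ with $\bm{p}$ held fixed. Applying the chain rule and using the defining property $\pi_j(\bm{H}_{i'})=\mathbb{P}(A_{i'+1,j}=1\mid \bm{H}_{i'})$ together with $Y_{i'}\sim\text{Bernoulli}(p_{a_{i'}})$ conditionally independent of $\bm{H}_{i'-1}$ given $a_{i'}$, the mass function evaluated at $(\bm{a}_1,y_1,\ldots,\bm{a}_i,y_i)$ equals
\[
  \Big(\prod_{i'=1}^{i}\pi_{a_{i'}}(\bm{h}_{i'-1})\Big)\cdot\prod_{i'=1}^{i}p_{a_{i'}}^{y_{i'}}(1-p_{a_{i'}})^{1-y_{i'}},
\]
where the parenthesised factor depends only on the realised history and not on $\bm{p}$. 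Grouping the $i$ patients according to the arm to which they were allocated, the $\bm{p}$-dependent factor collapses to $\prod_{j=0}^{k-1}p_j^{s_{i,j}}(1-p_j)^{n_{i,j}-s_{i,j}}$, with $s_{i,j},n_{i,j}$ the realisations of $S_{i,j},N_{i,j}$.

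Next I would invoke Bayes' theorem: $\mathbb{Q}^i(d\bm{x})$ is proportional to this mass function times the product prior $\mathbb{Q}_0(dx_0)\cdots\mathbb{Q}_{k-1}(dx_{k-1})$, and the allocation factor — constant in $\bm{x}$ — cancels against the normalising denominator. What remains is
\[
  \mathbb{Q}^i(d\bm{x})=\frac{\prod_{j=0}^{k-1}x_j^{s_{i,j}}(1-x_j)^{n_{i,j}-s_{i,j}}\,\mathbb{Q}_j(dx_j)}{\int_{[0,1]^k}\prod_{j=0}^{k-1}x_j^{s_{i,j}}(1-x_j)^{n_{i,j}-s_{i,j}}\,\mathbb{Q}_j(dx_j)},
\]
and since both integrand and reference measure are products, Fubini's theorem splits the denominator into a product of the one-dimensional integrals, so $\mathbb{Q}^i=\mathbb{Q}^i_0\times\cdots\times\mathbb{Q}^i_{k-1}$ with each marginal of the asserted form. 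The formula for $\pi^{\mathbb{Q}}_{\text{S-BRAR},j}(\bm{H}_i)=\mathbb{Q}^i(p_j>\max_{j'\neq j}p_{j'})$ then follows by integrating this product measure over the event $\{x_j>\max_{j'\neq j}x_{j'}\}$ and noting that the per-arm normalising constants are common to numerator and denominator.

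The main obstacle is the first step: carefully justifying that the adaptive allocation probabilities enter the law of $\bm{H}_i$ only through a $\bm{p}$-free factor (the sequential-randomisation / ignorability property), which requires only $\sigma(\bm{H}_{i'})$-measurability of $\pi_j(\bm{H}_{i'})$ and no further structure of $\pi$ — this is precisely why the decomposition is valid for every RA procedure, as the paper notes. After that, everything reduces to routine measure-theoretic bookkeeping (Bayes' rule plus Fubini), which is why the statement is labelled a straightforward lemma.
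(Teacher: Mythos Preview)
Your proposal is correct and follows essentially the same route as the paper's proof: write the likelihood of $\bm{H}_i$ given $\bm{p}$, observe that the allocation probabilities contribute only a $\bm{p}$-free factor, apply Bayes' theorem so that this factor cancels, and then factorize the resulting posterior across arms. Your version is slightly more careful than the paper's (you make the ignorability argument explicit and work with the prior measures $\mathbb{Q}_j(dx_j)$ rather than assuming densities), but the argument is the same.
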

\begin{proof}
    Let $f_{p_j}$ be the (prior) probability density function for $\mathbb{Q}_j$, $f_{{\bm H}_i|{\bm p}}$ be that of the distribution of ${\bm H}_i|{\bm p}$, and $f_{{\bm p}|{\bm H}_i}$ be that of $\mathbb{Q}^i$. Then,
    \begin{equation*}
         f_{{\bm H}_i|{\bm p}}(\bm{h}_i|\bm{p})=\prod_{j=0}^{k-1} p_j^{s_{i,j}}(1-p_j)^{n_{i,j}-s_{i,j}}\prod_{i'=0}^{i-1}\pi(\bm{h}_{i'})^{A_{i'+1}}(1-\pi(\bm{h}_{i'}))^{1-A_{i'+1}},
    \end{equation*}
    so that by Bayes' theorem,
    \begin{align*}
        f_{{\bm p}|{\bm H}_i}(\bm{p}|\bm{h}_i)=\frac{\prod_{j=0}^{k-1}f_{p_j}(p_j)p_{j}^{s_{i,j}}(1-p_{j})^{n_{i,j}-s_{i,j}}}{\int_{[0,1]^k}\prod_{j=0}^{k-1}f_{p_j}(p_j)p_{j}^{s_{i,j}}(1-p_{j})^{n_{i,j}-s_{i,j}}d\bm{p}}
        \\ =\prod_{j=0}^{k-1}\frac{f_{p_j}(p_j)p_{j}^{s_{i,j}}(1-p_{j})^{n_{i,j}-s_{i,j}}}{\int_0^1 f_{p_j}(p_j)p_{j}^{s_{i,j}}(1-p_{j})^{n_{i,j}-s_{i,j}}dp_j}.
    \end{align*}
\end{proof}

\begin{lemma}\label{lem:conj}
    Let~$s_{i,j},n_{i,j}$ be the realisations of~$S_{i,j},N_{i,j}$. If for all $0\leq j<k$, \\$\mathbb{Q}_j\overset{d}{\equiv}\text{Beta}(a_j,b_j)$ for some $\bm{a},\bm{b}\in[1,\infty)^k$, then $\mathbb{Q}^i\overset{d}{\equiv}\bigtimes_{j=0}^{k-1}\text{Beta}(a_j+s_{i,j},b_j+n_{i,j}-s_{i,j})$.
\end{lemma}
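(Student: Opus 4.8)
\textbf{Proof proposal for Lemma \ref{lem:conj}.}

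The plan is to reduce to the one-dimensional case via Lemma \ref{lem:prior_decomp} and then recognise the resulting density as that of an updated Beta distribution. First I would apply Lemma \ref{lem:prior_decomp}, which already gives the factorisation $\mathbb{Q}^i\equiv\mathbb{Q}^i_0\times\cdots\times\mathbb{Q}^i_{k-1}$, so it suffices to show $\mathbb{Q}^i_j\overset{d}{\equiv}\text{Beta}(a_j+s_{i,j},b_j+n_{i,j}-s_{i,j})$ for each fixed $j$. Since $\mathbb{Q}_j\overset{d}{\equiv}\text{Beta}(a_j,b_j)$, its density with respect to Lebesgue measure on $[0,1]$ is $x\mapsto x^{a_j-1}(1-x)^{b_j-1}/B(a_j,b_j)$, and substituting this into the formula for $\mathbb{Q}^i_j$ from Lemma \ref{lem:prior_decomp} yields a density proportional to
\begin{equation*}
    x^{a_j-1}(1-x)^{b_j-1}\cdot x^{s_{i,j}}(1-x)^{n_{i,j}-s_{i,j}} = x^{(a_j+s_{i,j})-1}(1-x)^{(b_j+n_{i,j}-s_{i,j})-1},
\end{equation*}
with the $B(a_j,b_j)$ factors cancelling between numerator and denominator.

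Next I would identify this expression as the kernel of a Beta density: since $0\le s_{i,j}\le n_{i,j}$ and $a_j,b_j\ge 1$, the exponents satisfy $a_j+s_{i,j}\ge 1>0$ and $b_j+n_{i,j}-s_{i,j}\ge 1>0$, so the function is integrable on $[0,1]$ and the normalising constant in the denominator of the expression from Lemma \ref{lem:prior_decomp} necessarily equals $B(a_j+s_{i,j},b_j+n_{i,j}-s_{i,j})$. Hence $\mathbb{Q}^i_j$ has exactly the density of $\text{Beta}(a_j+s_{i,j},b_j+n_{i,j}-s_{i,j})$, and taking the product over $j$ completes the proof.

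There is no real obstacle here: the statement is a routine conjugacy computation once Lemma \ref{lem:prior_decomp} is in hand, and the only point requiring a word of care is checking that the updated parameters remain in $(0,\infty)$ (indeed in $[1,\infty)$), which follows immediately from $a_j,b_j\ge 1$ and $0\le s_{i,j}\le n_{i,j}$, so that the updated Beta distribution is well-defined and the normalising constant is finite.
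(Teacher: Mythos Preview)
Your proposal is correct and follows essentially the same approach as the paper: invoke Lemma~\ref{lem:prior_decomp} to factorise the posterior, substitute the Beta prior density, collect exponents, and recognise the resulting kernel as that of $\text{Beta}(a_j+s_{i,j},b_j+n_{i,j}-s_{i,j})$. Your additional remark verifying that the updated parameters lie in $[1,\infty)$ is a small but welcome addition that the paper's proof leaves implicit.
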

\begin{proof}
    From the first part of Lemma \ref{lem:prior_decomp}, for all Lebesgue measurable sets~$E\subseteq[0,1]$,
    \begin{align*}
        \mathbb{Q}^{i}_{j}(E) & \propto \int_{B}x_{j}^{s_{i,j}}(1-x_{j})^{n_{i,j}-s_{i,j}}\frac{x_j^{a_j-1}(1-x_j)^{b_j-1}}{B(a_j,b_j)}dx_j \\
        & \propto \int_{B}x_{j}^{a_j+s_{i,j}-1}(1-x_{j})^{b_j+n_{i,j}-s_{i,j}-1}dx_j,
    \end{align*}
    which is of the required form to give the required result again by Lemma \ref{lem:prior_decomp}.
\end{proof}
\subsubsection{Calculating critical values}\label{appc}
Moreover, regarding the UX($\mathcal{P},\alpha$) and PP($\mathcal{P},p,\alpha$) tests, note that
\begin{equation}
    c(\mathcal{P},\alpha):=\inf\{c\in[0,1]:\:\sup_{p\in[0,1]}\mathbb{P}^{\pi}(\max_{j\in\{0,\ldots,k-1\}}T_j({\bm H}_{\bar{T}_c})>c\mid p_0=\ldots=p_{k-1}=p)\leq\alpha\}
\end{equation}
and
\begin{equation}
    c(\mathcal{P},p,\alpha):=\inf\{c\in[0,1]:\:\mathbb{P}^{\pi}(\max_{j\in\{0,\ldots,k-1\}}T_j({\bm H}_{\bar{T}_c})>c\mid p_0=\ldots=p_{k-1}=p)\leq\alpha\},
\end{equation}
where $\bar{T}_c$ is the last patient to be randomised before the trial is stopped as a consequence of using the rejection rule $\mathcal{R}(\mathcal{P},c)$. 

The latter was calculated using all the possible randomisation probabilities to calculate probability of every possible final state in an efficient way known as using ``forward equations", where different trial realisations are said to have the same final state iff the values of $T_j$ are equal for every possible $j$ at the end of the trial. Conveniently, a final state is determined by the number of positive and negative responses on each arm, since this is all that $\mathbb{Q}^i$ for any $i$, and so $T_j$, depends on. The probability of each final state is matched with the maximal value of $T_j$ over all $j$ at that state, and this larger set is then sorted based on the value of the maximal test statistic so that the probabilities of the largest ones can be summed until $\alpha$ is reached. 

\section{Code availability}

The code corresponding to this paper can be found at: \url{https://github.com/dkaddaj/ExactTS}.





\end{document}